\title[Multi-product Zeno effect]{Multi-product Zeno effect achieving higher order convergence rates}
\author{Tim M\"obus}
\email{moebustim@gmail.com}
\affiliation{Department of Mathematics, University of T\"ubingen, T\"ubingen, Germany}
\affiliation{Department of Mathematics, Technical University of Munich, Munich, Germany}
\begin{document}
	\begin{abstract}
		The quantum Zeno effect is a fundamental mechanism for implementing the effective dynamics of projected Hamiltonian and Lindbladian systems. It approximates the target projected evolution by interleaving Hamiltonian or Lindblad dynamics with quantum operations associated with the desired subspace. In contrast to the related Trotter product formula, the best-known convergence rate of the quantum Zeno effect is typically limited to order $1/n$.
        
        In this work, we improve this convergence rate by employing a multi-product formula, thereby achieving arbitrarily high-order convergence of the form $1/n^{K+1}$. This yields an improved approximation scheme for Zeno-like expectation values via an efficient post-processing method. The approach combines a modified Chernoff lemma, an adapted Dunford-Segal approximation, holomorphic functional calculus, and Chebyshev interpolation. We illustrate the method with the bosonic cat code and also consider the broader class of systems governed by the Bang-Bang decoupling method.
	\end{abstract}
\maketitle

\vspace*{-2ex}

\tableofcontents

\vspace*{\fill}

\section{Introduction}\label{sec:intro}
	The quantum Zeno effect is a fundamental quantum mechanical phenomenon that describes a time evolution frequently interrupted by quantum operations. In its early days, the Zeno effect was understood to freeze a given time evolution, or, in the words of \citeauthor{Misra.1977}, who named the effect after it was first described by \citeauthor{Beskow.1967} \cite{Beskow.1967}:
	\begin{center}
		\textit{An unstable particle which is continuously observed to see whether it decays\\ will never be found to decay!} \cite[Abst.]{Misra.1977}
	\end{center}
	This effect was experimentally verified by the seminal results \cite{Itano.1990,Fischer.2001,Signoles.2014,Bretheau.2015} and many others. In the last two works, it is interesting to note that the Zeno limit is not just a frozen, determined state but rather an ongoing, projected dynamic --- the so-called Zeno dynamic. This convergence behavior has been explored in various studies \cite{Facchi.2004,Facchi.2008,Schmidt.2004}. Even an ``exponential rise of dynamical complexity'' induced by the Zeno effect has been observed \cite{Burgarth.2014}. To highlight this, we express the Zeno effect as the following general limit \cite{Becker.2021,Moebus.2022}:
	\begin{equation}\label{eq:Zeno}
		\Bigl\|\bigl(Me^{\frac{t}{n}\cL}\bigr)^n - \sum_{j=1}^{J}\lambda_j^n e^{t P_j \cL P_j} P_j\Bigr\|_{1 \rightarrow 1} = \cO\Bigl(\frac{1}{n}\Bigr) \qquad \text{for} \qquad n \rightarrow \infty\,,
	\end{equation}
	where $M$ can be any finite quantum operation with $J$ eigenvalues $\lambda_j$ on the unit circle and eigenprojections $P_j$, and $\cL$ is any Lindbladian \cite{Gorini.1976,Lindblad.1976}. If $J = 1$, this reduces to the case discussed in \cite{Moebus.2019}, and if $M$ is a projection and $\cL$ a Hamiltonian to the case discussed in the founding papers \cite{Beskow.1967,Misra.1977}. Under appropriate assumptions, the above limit can be generalized to infinite dimensions and unbounded generators \cite{Moebus.2024,Moebus.2022,Becker.2021}, which is also discussed in this work.

    In this work, we improve \Cref{eq:Zeno} by summing over differently weighted product formulas:
    \begin{equation*}
        \begin{aligned}
            \Bigl\|\sum_{\ell=1}^{K+1} c_\ell \bigl(Me^{\frac{t}{\ell n p}\cL}\bigr)^{\ell n p} - \sum_{j=1}^{J} \lambda_j e^{tP_j\cL P_j} P_j \Bigr\|_{1 \rightarrow 1}
            &= \cO\Bigl(\frac{1}{n^{K+1}}\Bigr)\,.
        \end{aligned}
    \end{equation*}
	Since the coefficients $c_\ell$ are not necessarily positive, the result implies an exponential improvement in the convergence of Zeno-like expectations due to post-processing (see \Cref{thm:multi-product-zeno-effect-1} for details):
    \begin{equation*}
        \begin{aligned}
            \biggl|\sum_{\ell=1}^{K+1} c_\ell\tr\bigl[O\bigl(Me^{\frac{t}{\ell n p}\cL}\bigr)^{\ell n p}\bigr]- \tr\bigl[O\sum_{j=1}^{J} \lambda_j e^{tP_j\cL P_j} P_j(\rho)\bigr]\biggr|&= \cO\Bigl(\frac{1}{n^{K+1}}\Bigr)\,.
        \end{aligned}
    \end{equation*}
	
	The insight that a quantum dynamical semigroup can be projected onto a predefined subspace through frequent `kicks' or `pulses' is fundamental for decoupling and motivates the use of the quantum Zeno effect in error correction schemes to mitigate decoherence in open quantum systems \cite{Hosten.2006,Franson.2004,Beige.2000,Barankai.2018,Luchnikov.2017,Facchi.2004,Dominy.2013,Erez.2004}. In particular, the generalized quantum Zeno effect is equivalent to various `Bang-Bang' methods \cite{Facchi.2004,Burgarth.2022,Burgath.2024}, which can also be used to engineer an evolution by sequentially interrupting dissipative dynamics \cite{Viola.1998,Viola.2002}. Other applications of the quantum Zeno effect include ground state preparation of a two-dimensional quantum many-body Hamiltonian \cite{Zhang.2024}, purifying quantum states \cite{Yuasa.2003}, generating entanglement \cite{Nakazato.2004,Wang.2008}, and engineering the $X(\theta)$ gate in the bosonic cat-code \cite{Guillaud.2023,Guillaud.2019}, with more examples found in the literature.
	
	In terms of simulating the above systems, the speed of convergence of the quantum Zeno effect is quite limited. In comparison, Trotter's product formula was improved in the early days by Suzuki's work \cite{Suzuki.1997}, demonstrating that a convergence rate of order $\cO\left(\tfrac{1}{n^{K}}\right)$ for arbitrary $K \in \mathbb{N}$ can be achieved. Unfortunately, it is not clear whether --- and if so, how --- this approach applies to the Zeno effect. Moreover, Trotter's product formula has seen further advances in simulation through so-called multi-product formulas, which improve the gate complexity in comparison to the Suzuki approach \cite{Childs.2021,Low.2019,Zhuk.2024,Aftab.2024,Mizuta.2026,Carrera.2023,Yu.2025}. The analysis of these formulas heavily relies on the Baker-Campbell-Hausdorff formula, which is not applicable to the quantum Zeno effect.
    
	Nevertheless, in the works \cite{Moebus.2019,Moebus.2022,Moebus.2024}, we provided proof strategies that omit counting arguments to address the unbounded case. These strategies allow for a simple and explicit error representation, which is key for the multi-product Zeno effect --- the main result of this work.
    
	After introducing the mathematical and quantum mechanical framework in \Cref{sec:preliminaries}, we analyze the higher-order expansion of the quantum Zeno effect errors in \Cref{sec:higher-order-expansion}. In this section, all proofs are formulated for contractions on Banach spaces, in contrast to the subsequent \Cref{sec:multi-product}, which discusses the multi-product Zeno effect in the quantum mechanical setting, along with examples. Finally, the results are summarised in \Cref{sec:conclusion}, and all proofs are provided in the appendix.

\section{Preliminaries}\label{sec:preliminaries}
	Before presenting the results, we introduce the mathematical framework and notation. Throughout, $(\cX, \|\cdot\|_{\cX})$ denotes a Banach space and $\cB(\cX)$ the Banach space of bounded linear endomorphisms (operators) on $\cX$ with operator norm $\|A\|_{\cX \rightarrow \cX} \coloneqq \sup_{\|x\|_{\cX} = 1} \|A(x)\|_{\cX}$ for $A \in \cB(\cX)$ and identity $\1$. We call an operator $A \in \cB(\cX)$ a projection if $A^2 = A$ and a contraction if $\|A\|_{\cX \rightarrow \cX} \leq 1$. The resolvent set of $A \in \cB(\cX)$ is defined by $\mathrm{res}(A) \coloneqq \{z \in \C \;|\; (z - A) \text{ is bijective}\}$, its spectrum by $\spec(A) \coloneqq \C \backslash \mathrm{res}(A)$, and the resolvent of $A$ is defined by $R(z, A) \coloneqq (z - A)^{-1}$ for all $z \in \mathrm{res}(A)$.
	
	Next, an operator-valued map $I \ni t \mapsto X(t) \in \cB(\cX)$ defined on an interval $I \subset \R$ is continuous if it is continuous with respect to the operator norm, strongly continuous if the vector-valued map $I \ni t \mapsto X(t)x \in \cX$ is continuous w.r.t.~the norm on $\cX$, and differentiable at $t \in I$ if it is Fréchet differentiable at $t$ with respect to the operator norm. If the operator-valued map is (strongly) continuous on $[a, b] \subset I$, the Bochner integral $\int_{a}^b X(t) dt$ (resp.~$\int_{a}^b X(t)x dt$ for $x\in\cX$) is well-defined and extends the Lebesgue integral to operator-valued maps. It satisfies the usual properties of linearity, the triangle inequality, additivity of disjoint sets, dominated convergence, and the fundamental theorem of calculus can be proven:
	\begin{equation*}
		X(b) - X(a) = \int_{a}^{b} \frac{d}{dt} X(t) dt\,,
	\end{equation*}
	assuming that $X(t)$ is differentiable for all $t \in (a, b)$ with continuous derivative \cite[Thm.~3.7.4]{Hille.1996}. Next, a strongly continuous ($C_0$-)semigroup is defined by an operator-valued map $(T_t)_{t \geq 0}$ that satisfies the following properties for all $t, s \geq 0$ and $x \in \cX$:
	\begin{equation*}
		T_t T_s = T_{t+s}, \quad T_0 = \1, \quad \text{and} \quad \lim_{t \downarrow 0} T_t x = x\,.
	\end{equation*}
	Note that the latter continuity condition is in the strong operator topology, i.e., pointwise convergence for $x \in \cX$, and not in the uniform operator topology with respect to the operator norm. Every $C_0-$semigroup is uniquely determined by its generator given by
	\begin{equation*}
		\cL(x) = \lim_{t \to 0^+} \frac{1}{t}(T_t(x) - x) \,,
	\end{equation*}
	for any $x \in \cD(\cL) = \{x \in \cX : t \mapsto T_t(x) \text{ is differentiable on } \R_{\geq 0}\}$, which is an unbounded, densely defined, and closed operator, justifying the notation $T_t = e^{t \cL}$. An unbounded, densely defined operator is a linear map $\cL:\cD(\cL) \subset \cX \rightarrow \cX$ defined on a domain $\cD(\cL)$ dense in $\cX$. Moreover, it is called closed if its graph $\{(x, \cL(x))\,|\,x \in \cD(\cL)\}$ is a closed set in the product space $\cX \times \cX$. The generator is bounded if and only if $(T_t)_{t \geq 0}$ is continuous in the uniform topology \cite[Thm.~II.1.4]{Engel.2000}. By convention, we extend all densely defined and bounded operators by the bounded linear extension theorem to bounded operators on $\cX$ \cite[Thm.~2.7-11]{Kreyszig.1989}.
	
	Quantum mechanical systems are defined on a separable Hilbert space $\cH$ over the complex field $\C$ with an inner product $\braket{\cdot\,|\,\cdot}$ and induced norm $\|\psi\| \coloneqq \sqrt{\abs{\braket{\psi\,|\,\psi}}}$. The adjoint $A^\dagger$ of $A \in \cB(\cH)$ is uniquely defined by $\braket{A\phi\,|\,\varphi} = \braket{\phi\,|\,A^\dagger \varphi}$ for all $\phi, \varphi \in \cH$, and $A$ is positive if $A = B^\dagger B$ for some $B \in \cB(\cH)$. Then, the trace norm of $A \in \cB(\cH)$ is defined by $\|A\|_1 = \tr[|A|]$ with absolute value $|A| \coloneqq \sqrt{A^\dagger A}$ \cite[Thm.~2.4.4]{Simon.2015} and trace $\tr[|A|] = \sum_{j \in \N} \bra{\varphi_j} |A| \ket{\varphi_j}$ for any orthonormal basis $\{\ket{\varphi_j}\}_{j \in \N}$. The Banach space of trace-class operators is denoted by $\cT_1 = \{A \in \cB(\cH) \,|\, \|A\|_1 < \infty\}$ \cite[Sec.~3.6]{Simon.2015}. The subset $S_1 \subset \cT_1$ of positive operators with trace $1$ is called density operators or quantum states, which evolve via quantum operations in the Schrödinger picture. In the Schrödinger picture, a quantum operation $\cE\in\cB(\cT_1)$ is a completely positive, i.e.~$\cE\otimes\1(X)\geq0$ for all $X\geq0$, and trace non-increasing, i.e.~$\tr[\cE(X)]\leq\tr[X]$ for all $X\geq0$, linear map, and a quantum channel if equality holds in the latter. Then, a $C_0$-semigroup defined on $\cX = \cT_1$ is called a quantum dynamical $C_0$-semigroup if it is a quantum operation as well. To connect the first operator theoretical part and the quantum mechanical part of this section, note that every quantum operation is a contraction on $\cX=\cT_1$ with respect to the operator norm $\|\cdot\|_{1\rightarrow 1}$ (or diamond norm) \cite{WolfLecture}.

\section{Higher order expansion}\label{sec:higher-order-expansion}
	The main aim of this section is to analyze the higher-order terms of the quantum Zeno effect error in powers of $\frac{1}{n}$. Specifically, we expand the following difference in terms of $\frac{1}{n}$:
	\begin{equation}\label{eq:Zeno-error}
		\left( M e^{\frac{1}{n}t \mathcal{L}} \right)^n - \sum_{j=1}^{J} \lambda_j^n e^{t P_j \mathcal{L} P_j} P_j\,.
	\end{equation}
	Recall that in this section we consider contractions, contractive $C_0$-semigroups, and contractive operator-valued maps on a Banach space $\mathcal{X}$. This framework will be applied to the quantum setting in \Cref{sec:multi-product}, where $\mathcal{X}$ is chosen as the space of trace-class operators, as discussed in \Cref{sec:multi-product}.

	\subsection{Projective quantum Zeno effect}\label{subsec:projective-zeno}
		To provide a clear progression of ideas, we first introduce the tools used to prove an expansion for the projective Zeno effect (where $M = P$, $J = 1$, and $\lambda_1 = 1$).
		
		\subsubsection{Auxiliary lemmas}
			The proof relies on a modification of Chernoff's $\sqrt{n}$-Lemma \cite{Chernoff.1968} and the Dunford–Segal approximation \cite{Dunford.1946, Gomilko.2014}. We begin with the former:
			
			\begin{lem}[Modified Chernoff Lemma \texorpdfstring{\cite[Lem.~4.2]{Moebus.2022}}{???}]\label{lem:chernoff}
				Let $C \in \cB(\cX)$ be a contraction on a Banach space $\cX$. Then, $(e^{t(C - \1)})_{t \geq 0}$ is a contraction semigroup, and
				\begin{equation*}
					\left(C^n - e^{n(C - \1)}\right) = -n(C - \1)^2 \int_{0}^{1} \tau C_\tau^{n-1} e^{(1-\tau) n(C - \1)}\, d\tau\,,
				\end{equation*}
				where $C_{\tau} = (1-\tau)\1 + \tau C$. If $C: [0,1] \rightarrow \cB(\cX),\, s \mapsto C(s)$ is a contractive, continuously differentiable operator-valued map with $C(0)=\1$, then for $s \in [0,1]$,
				\begin{align*}
					\Bigl(C^n(s) - e^{n(C(s) - \1)}\Bigr) &= -ns^2 \iiint_{0}^{1} \tau_1 e^{(1-\tau_1) n(C(s) - \1)} C'(s\tau_2) C'(s\tau_3) C_{\tau_1}^{n-1}(s) \, d\tau_{321}\,,
				\end{align*}
				where $C'(s)$ is the first-order Fréchet derivative, $C^n(s)=(C(s))^n$, and $d\tau_{321}$ abbreviates $d\tau_{3} d\tau_{2} d\tau_{1}$.
			\end{lem}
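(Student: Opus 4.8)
The plan is to establish both identities by applying the fundamental theorem of calculus to suitably chosen interpolating paths, after first disposing of the semigroup claim. The one structural feature I would exploit throughout the first identity is that every operator appearing there is a function of the single operator $C$, so the entire computation takes place in the commutative Banach subalgebra generated by $C$, which trivialises all product and differentiation rules.

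For the semigroup statement, since $C - \1 \in \cB(\cX)$ is bounded it generates the uniformly continuous semigroup $e^{t(C-\1)} = \sum_{k \geq 0} \tfrac{t^k}{k!}(C-\1)^k$, and the semigroup law holds because $C-\1$ commutes with itself. Contractivity follows by factoring $e^{t(C-\1)} = e^{-t}e^{tC}$ (legitimate as $-t\1$ and $tC$ commute) and estimating $\|e^{tC}\|_{\cX \to \cX} \leq \sum_{k \geq 0} \tfrac{t^k}{k!}\|C\|_{\cX \to \cX}^k \leq e^{t}$ via $\|C\|_{\cX \to \cX} \leq 1$, so that $\|e^{t(C-\1)}\|_{\cX \to \cX} \leq e^{-t}e^{t} = 1$.

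For the first identity I would introduce the interpolation $\Phi(\tau) \coloneqq C_\tau^{n}\,e^{(1-\tau)n(C-\1)}$, which satisfies $\Phi(0) = e^{n(C-\1)}$ and $\Phi(1) = C^n$, so the left-hand side equals $\int_0^1 \Phi'(\tau)\,d\tau$. Differentiating by the product rule — permissible since $C_\tau$, $C$ and the exponential all commute — and combining the two resulting terms through $C_\tau^{n-1} - C_\tau^{n} = C_\tau^{n-1}(\1 - C_\tau) = -\tau(C-\1)C_\tau^{n-1}$ collapses $\Phi'(\tau)$ to the single expression $-n\tau(C-\1)^2 C_\tau^{n-1}e^{(1-\tau)n(C-\1)}$; the fundamental theorem of calculus then yields the claim.

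For the second identity the strategy is to apply the first identity to the \emph{fixed} contraction $C(s)$ and then rewrite the factor $(C(s)-\1)^2$. Because $C(s)$, $C_{\tau_1}(s)$ and $e^{(1-\tau_1)n(C(s)-\1)}$ are all functions of $C(s)$, I may freely reorder them so that $(C(s)-\1)^2$ is parked between the exponential and $C_{\tau_1}^{n-1}(s)$. Using $C(0)=\1$, the fundamental theorem of calculus with the substitution $u = s\tau$ gives $C(s)-\1 = s\int_0^1 C'(s\tau)\,d\tau$, hence $(C(s)-\1)^2 = s^2 \iint_0^1 C'(s\tau_2)C'(s\tau_3)\,d\tau_2\,d\tau_3$; inserting this into the reserved slot produces exactly the stated triple integral. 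The main obstacle here is non-commutativity: the derivative factors $C'(s\tau_2)$ and $C'(s\tau_3)$ commute neither with each other nor with $C(s)$, so I must be careful to expand only the genuinely commuting $C(s)$-functions and to preserve the left-to-right order of the two $C'$ factors — which is precisely why $(C(s)-\1)^2$ has to be moved into the sandwiched position \emph{before} the substitution. Continuous differentiability of $s \mapsto C(s)$ is what justifies the Bochner integrals and the use of the fundamental theorem of calculus.
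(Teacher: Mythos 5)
Your proof is correct and follows essentially the same route as the paper: the fundamental theorem of calculus applied to the interpolation $\tau \mapsto C_\tau^{n}\,e^{(1-\tau)n(C-\1)}$ yields the first identity, and the second follows by applying it to the fixed contraction $C(s)$ and expanding $(C(s)-\1)^2 = s^2\iint_0^1 C'(s\tau_2)C'(s\tau_3)\,d\tau_2\,d\tau_3$ via the fundamental theorem of calculus, using that functions of $C(s)$ commute. Your write-up is in fact more explicit than the paper's about the commutation step needed to park the two $C'$ factors between the exponential and $C_{\tau_1}^{n-1}(s)$, which the paper leaves implicit.
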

			
			\begin{proof}
				The proof is given in detail in \Cref{appx-sec:projective-zeno-auxiliary}.
			\end{proof}
			
			Next, we prove a modification of the Dunford–Segal approximation \cite{Dunford.1946, Gomilko.2014}:
			
			\begin{lem}[Dunford–Segal approximation]\label{lem:dunford-segal}
				Let $C: [0,1] \rightarrow \cB(\cX),\, s \mapsto C(s)$ be a twice continuously differentiable operator-valued map with $C(0)=\1$. Then, for $s_1,s_2,t\in[0,1]$,
				\begin{equation*}
					\Bigl(e^{\frac{t}{s_1}(C(s_2) - \1)} - e^{tC'}\Bigr) = t \frac{s_2^2}{s_1} \iiint_0^1 e^{(1 - \tau_1)\frac{t}{s_1}(C(s_2) - \1)} \biggl(\tau_2 C''(\tau_3\tau_2 s_2) + \frac{s_2-s_1}{s_2^2} C' \biggr) e^{\tau_1tC'} \, d\tau_{321}
				\end{equation*}
				where $C'(s)$ and $C''(s)$ are the first and second-order Fréchet derivatives, and $C' = C'(0)$. If $C(s)$ is a contraction for all $s\in[0,1]$, then $(e^{tC'})_{t\geq0}$ is a contraction semigroup.
			\end{lem}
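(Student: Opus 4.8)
The plan is to compare the two semigroups through the standard variation-of-parameters (Duhamel) identity and then feed in a second-order Taylor expansion of $C$ at the origin. Since $C$ is Fréchet differentiable into $\cB(\cX)$, the generator $C' = C'(0)$ is a bounded operator, and so is $\tfrac{1}{s_1}(C(s_2) - \1)$; hence both semigroups are norm continuous and every manipulation below stays within $\cB(\cX)$ with no domain subtleties. First I would record the telescoping identity
\begin{equation*}
	e^{\frac{t}{s_1}(C(s_2)-\1)} - e^{tC'} = t\int_0^1 e^{(1-\tau_1)\frac{t}{s_1}(C(s_2)-\1)}\Bigl(\tfrac{1}{s_1}(C(s_2)-\1) - C'\Bigr) e^{\tau_1 t C'}\, d\tau_1\,,
\end{equation*}
which follows by integrating $\tfrac{d}{d\tau}\bigl[e^{(t-\tau)\frac{1}{s_1}(C(s_2)-\1)} e^{\tau C'}\bigr]$ over $\tau\in[0,t]$ and rescaling $\tau=\tau_1 t$.

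Next I would expand the bracketed factor. Taylor's theorem with integral remainder, applied to the twice continuously differentiable map $C$ with $C(0)=\1$, gives $C(s_2)-\1 = s_2 C' + \int_0^{s_2}(s_2 - u)C''(u)\,du$, so that
\begin{equation*}
	\tfrac{1}{s_1}(C(s_2)-\1) - C' = \tfrac{s_2 - s_1}{s_1}C' + \tfrac{1}{s_1}\int_0^{s_2}(s_2-u)C''(u)\,du\,.
\end{equation*}
The $C'$-term already has the required shape, so it only remains to recast the remainder integral in the nested triple-integral form of the statement.

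The decisive bookkeeping step is the identity $\tfrac{1}{s_1}\int_0^{s_2}(s_2-u)C''(u)\,du = \tfrac{s_2^2}{s_1}\int_0^1\!\int_0^1 \tau_2 C''(\tau_3\tau_2 s_2)\,d\tau_3\,d\tau_2$. Substituting $w=\tau_3\tau_2 s_2$ in the inner integral turns it into $\tfrac{1}{\tau_2 s_2}\int_0^{\tau_2 s_2}C''(w)\,dw$; multiplying by $\tau_2$, substituting $v=\tau_2 s_2$ in the outer integral, and applying Fubini on the triangle $\{0\le w\le v\le s_2\}$ produces $\tfrac{1}{s_2^2}\int_0^{s_2}(s_2-w)C''(w)\,dw$, and the prefactor $\tfrac{s_2^2}{s_1}$ recovers the remainder term. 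The constant $\tfrac{s_2-s_1}{s_2^2}C'$ inside the bracket integrates trivially over $\tau_2,\tau_3$ to $\tfrac{s_2-s_1}{s_1}C'$, so reinserting everything into the Duhamel identity yields the claimed formula. I expect this change-of-variables and Fubini reconciliation to be the only genuinely error-prone part, since the factor $\tfrac{s_2^2}{s_1}$ and the nested argument $\tau_3\tau_2 s_2$ of $C''$ must line up exactly; the Bochner--Fubini interchange itself is justified by norm continuity of $C''$ on the compact interval $[0,1]$.

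For the contraction claim I would invoke the first part of \Cref{lem:chernoff}: since $C(s)$ is a contraction, $(e^{r(C(s)-\1)})_{r\geq 0}$ is a contraction semigroup for every fixed $s\in(0,1]$, and in particular $e^{\frac{t}{s}(C(s)-\1)}$ is a contraction. The second-order Taylor bound gives $\tfrac{1}{s}(C(s)-\1)\to C'(0)=C'$ in operator norm as $s\downarrow 0$, hence $e^{\frac{t}{s}(C(s)-\1)}\to e^{tC'}$ in norm by continuous dependence of the exponential of a bounded operator on its argument. As a norm limit of contractions, $e^{tC'}$ is itself a contraction for every $t\geq 0$, so $(e^{tC'})_{t\geq0}$ is a contraction semigroup.
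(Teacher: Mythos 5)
Correct, and essentially the same proof as the paper's: both start from the Duhamel identity and then expand $\tfrac{1}{s_1}(C(s_2)-\1)-C'$ to second order in $s_2$ — the paper by iterating the fundamental theorem of calculus, you by Taylor's integral remainder plus a change of variables and Fubini, which is the same computation in different bookkeeping. For the contraction claim both arguments realize $e^{tC'}$ as a limit of the contractions $e^{\frac{t}{s}(C(s)-\1)}$ as $s\downarrow 0$; the paper bounds the difference quantitatively using the identity just proven, while you use norm convergence of $\tfrac{1}{s}(C(s)-\1)$ to $C'$ and continuity of the operator exponential, which is equally valid.
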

			\begin{proof}
				The proof is given in detail in \Cref{appx-sec:projective-zeno-auxiliary}.
			\end{proof}
		
		\subsubsection{Higher-order expansion of the projective Zeno effect}
			With the help of the auxiliary result above, we prove an explicit representation of the first-order projective Zeno error.
			\begin{prop}[Zeno second-order expansion]\label{prop:projective-zeno-second-order}
				Let $(\cL, \cD(\cL))$ be the generator of a contractive $C_0$-semigroup, and let $P \in \cB(\cX)$ be a contractive projection such that $P\cL$, $\cL P$, and $P\cL^2$ are bounded. Then, for all $n \geq 1$ and $t \geq 0$, 
				\begingroup\small
				\begin{equation*}
					\begin{aligned}
						&\Bigl\|(Pe^{\frac{t}{n}\cL})^n - e^{tP\cL P}P - \frac{t^2}{2n}\int_0^1 e^{\tau_1 P\cL P} P\cL[\cL, P] P e^{t(1 - \tau_1)P\cL P} d\tau_1- \frac{t}{n}e^{tP\cL P}P\cL(\1-P) \Bigr\|_{\cX \rightarrow \cX}\\
						&\qquad\qquad\qquad\qquad\qquad\leq\frac{1}{2n^2}\Bigl(1+2t^2\|P\cL^2\|_{\cX\rightarrow\cX}+t\|\cL P\|_{\cX\rightarrow\cX}(4+t\|\cL P\|_{\cX\rightarrow\cX})\Bigr)^2\\
					\end{aligned}
				\end{equation*}
				\endgroup
			\end{prop}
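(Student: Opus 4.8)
The plan is to interpolate between the discrete product and the Zeno dynamics through the single auxiliary semigroup $e^{n(Pe^{\frac tn\cL}-\1)}$, splitting
\[
(Pe^{\frac tn\cL})^n - e^{tP\cL P}P = \Bigl[(Pe^{\frac tn\cL})^n - e^{n(Pe^{\frac tn\cL}-\1)}\Bigr] + \Bigl[e^{n(Pe^{\frac tn\cL}-\1)} - e^{tP\cL P}P\Bigr],
\]
and pushing each bracket one order past the classical $\cO(1/n)$ estimate. For the first bracket I would apply \Cref{lem:chernoff} to the contraction $C \coloneqq Pe^{\frac tn\cL}$ (a contraction as a product of the contractions $P$ and $e^{\frac tn\cL}$), giving the exact remainder $-n(C-\1)^2\int_0^1 \tau C_\tau^{n-1}e^{(1-\tau)n(C-\1)}\,d\tau$. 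The decisive structural input is the exact identity $(\1-P)C = (\1-P)Pe^{\frac tn\cL} = 0$, whence $(\1-P)C_\tau = (1-\tau)(\1-P)$ and $(\1-P)e^{\alpha(C-\1)} = e^{-\alpha}(\1-P)$ for $\alpha\ge0$; consequently every contribution carrying a factor $(\1-P)$ on the side facing the integral is exponentially damped and falls into the $\cO(1/n^2)$ remainder.

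To isolate the survivors I would Taylor expand with integral remainder, $C - \1 = (P-\1) + \tfrac tn P\cL + \tfrac{t^2}{2n^2}P\cL^2 + \dots$, which is legitimate precisely because $P\cL$ and $P\cL^2$ are bounded. Squaring and using $(P-\1)^2 = \1-P$, $(P-\1)P\cL = 0$, and $P\cL(P-\1) = -P\cL(\1-P)$, every piece of $(C-\1)^2$ except $\tfrac{t^2}{n^2}P\cL P\cL$ terminates in $(\1-P)$ and drops out, while the Beta-integrals $\int_0^1 \tau(1-\tau)^{n-1}\,d\tau = \tfrac1{n(n+1)}$ quantify the residual. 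The lone survivor is governed by $P\cL P\cL$, whose leading evaluation against the integral yields a term proportional to $\tfrac1n\,P\cL P\cL P$ (the trailing $P$ supplied by the integral); this is exactly the $-P\cL P\cL P$ half of the announced bulk correction, which I carry forward to merge with the second bracket.

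For the second bracket I would invoke \Cref{lem:dunford-segal} with the twice continuously differentiable map $C(s) \coloneqq Pe^{s\cL}P + (\1-P)$, for which $C(0)=\1$, $C'(0) = P\cL P$, and $C''(0) = P\cL^2 P$; that these derivatives are bounded is exactly what the three hypotheses guarantee, and $e^{tC'(0)} = e^{tP\cL P}$ is the (contractive) Zeno semigroup. Splitting the true exponent as $n(Pe^{\frac tn\cL}-\1) = n(Pe^{\frac tn\cL}P - P) + n(Pe^{\frac tn\cL}-\1)(\1-P)$, the lemma applied to the diagonal part turns its triple-integral remainder --- read as a first-order Duhamel perturbation of the Zeno semigroup driven by $C''(0)=P\cL^2 P$ --- into the $+P\cL^2 P$ half of the bulk term $\frac{t^2}{2n}\int_0^1 e^{\tau_1 P\cL P} P\cL[\cL, P] P\, e^{t(1 - \tau_1)P\cL P}\, d\tau_1$; recombined with the $-P\cL P\cL P$ contribution inherited from the Chernoff step, the two halves assemble the commutator $P\cL[\cL,P]P = P\cL^2 P - P\cL P\cL P$. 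The off-diagonal part, with leading term $tP\cL(\1-P)$, contributes through a single Duhamel hop into the exponentially damped kernel and produces the boundary term $\frac tn e^{tP\cL P}P\cL(\1-P)$ at order $1/n$ (intuitively, the first factor maps a kernel vector $(\1-P)x$ to $\tfrac tn P\cL(\1-P)x$, which then undergoes Zeno evolution).

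It remains to collect the several remainders and estimate each factor by contractivity of $P$, $e^{\frac tn\cL}$, $e^{tP\cL P}$, and all interpolating exponentials, reducing every surviving operator to the admissible norms $\|P\cL^2\|_{\cX\to\cX}$ and $\|\cL P\|_{\cX\to\cX}$ (using e.g.\ $\|P\cL P\|_{\cX\to\cX}\le\|\cL P\|_{\cX\to\cX}$ and that trailing projections turn every $P\cL$ into a $P\cL P$); the prefactor $\tfrac12$ and the overall square in $\tfrac1{2n^2}(\cdots)^2$ trace back to $\int_0^1\tau\,d\tau=\tfrac12$ and the two first-order factors in the remainder representations. I expect the real obstacle to be the bookkeeping rather than any isolated estimate: tracking all genuine $\cO(1/n)$ contributions through nested triple integrals and noncommuting operator exponentials, cleanly separating exponentially damped leakage from the $\cO(1/n)$ and $\cO(1/n^2)$ parts, checking that the two halves of the bulk term assemble across the two brackets, and keeping every Taylor expansion on the correct side of $P$ so that only $P\cL$, $\cL P$, $P\cL^2$ --- never the unbounded $\cL$ alone --- enter the bounds.
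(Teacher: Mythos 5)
Your proposal is built on the same two engines as the paper --- \Cref{lem:chernoff} and \Cref{lem:dunford-segal}, applied once for the zeroth order and then iterated to extract the $1/n$ coefficient --- but it is organized around a genuinely different decomposition, and the route is viable. The paper never runs Chernoff on the asymmetric contraction $Pe^{\frac tn\cL}$ over all of $\cX$: it first splits algebraically,
\begin{equation*}
	(Pe^{\frac tn\cL})^n=(Pe^{\frac tn\cL}P)^n+(Pe^{\frac tn\cL}P)^{n-1}\,Pe^{\frac tn\cL}(\1-P)\,,
\end{equation*}
then applies Chernoff and Dunford--Segal exclusively on $P\cX$, where $C(s)=Pe^{s\cL}P$ has identity $P$ and $C(\tfrac1n)-P=\cO(1/n)$, and obtains the boundary term $\frac tn e^{tP\cL P}P\cL(\1-P)$ from the fundamental theorem of calculus applied to $Pe^{\frac tn\cL}(\1-P)$; no exponentially damped quantities ever arise, which is why the explicit constant falls out of a short list of norm bounds. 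You instead keep the full space and interpolate through $e^{n(Pe^{\frac tn\cL}-\1)}$, so you must tame the $\cO(1)$ block of $(C-\1)^2$ and the $-n(\1-P)$ block of the exponent. Your identities $(\1-P)C_\tau=(1-\tau)(\1-P)$ and $(\1-P)e^{\alpha(C-\1)}=e^{-\alpha}(\1-P)$ are exactly what is needed, the block-triangular Duhamel step does yield $\frac tn e^{tP\cL P}P\cL(\1-P)$, and matching the two halves of $P\cL[\cL,P]P$ across the brackets is legitimate because $P\cL P\cL P=(P\cL P)^2$ commutes with $e^{sP\cL P}$, so the collapsed Chernoff contribution $-\frac{t^2}{2n}(P\cL P)^2e^{tP\cL P}$ coincides with the sandwiched integral in the statement. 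Three caveats on the bookkeeping you defer. First, the Beta integral you quote is not the right residual: $n\int_0^1\tau(1-\tau)^{n-1}\,d\tau=\frac1{n+1}$ alone gives only $\cO(1/n)$; you must retain the exponential factor too, and then $n\int_0^1\tau(1-\tau)^{n-1}e^{-(1-\tau)n}\,d\tau=e^{-n}$ exactly. Second, the ``leading evaluation'' $\int_0^1\tau C_\tau^{n-1}e^{(1-\tau)n(C-\1)}\,d\tau=\tfrac12e^{tP\cL P}P+\cO(1/n)+\text{damped}$ is itself a Zeno-type statement needing a second pass of both lemmas --- this is precisely what the paper's Equations (\ref{eq:zeno-proj-expansion-1}--\ref{eq:zeno-proj-expansion-3}) carry out, so that layer cannot be skipped. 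Third, the damped terms are not individually negligible against the stated constant: the $t$-independent part of your Chernoff remainder is exactly $-e^{-n}(\1-P)$, whose norm at $n=2$ already exceeds $\frac1{2n^2}(1+\cdots)^2$ for small $t$; it must be cancelled against the matching $+e^{-n}(\1-P)$ inside your second bracket, i.e.\ the brackets have to be combined before estimating. With those repairs your decomposition proves the proposition, at the price of heavier $(\1-P)$-sector analysis than the paper's projection onto $P\cX$.
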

			\begin{proof}[Proof-sketch]
				A detailed proof can be found in \Cref{appx-sec:projective-zeno-main}. In the first step, we apply the modified Chernoff Lemma \eqref{lem:chernoff} and the Dunford-Segal approximation \eqref{lem:dunford-segal} to achieve the zero-order approximation in a slightly generalized setup. By applying the above tools a second time to the error explicitly given by an integral, we derive the statement.
			\end{proof}
			\begin{rmk}[Bounded generators]
				Note that \Cref{prop:projective-zeno-second-order} applies directly to bounded $\cL$, and the assumptions that $P\cL$, $\cL P$, and $P\cL^2$ are bounded become redundant.
			\end{rmk}
			\begin{rmk}[Time dependency]
				In the above result, the convergence rate is also tracked in time and is shown to be a quartic polynomial. Due to the inductive nature of the proof and the paper's focus on higher-order expansions, we do not track the time-dependence in the following and keep it constant. However, it is an interesting question to investigate in future work whether the time dependency can be improved.
			\end{rmk}
			In the following main result for the projective Zeno effect, we show that an expansion in powers of $\frac{1}{n}$ exists for every $K\in\N$, provided that the appropriate boundedness conditions are satisfied.
			\begin{thm}[Projective Zeno higher-order expansion]\label{thm:projective-zeno-higher-order}
				Let $(\cL, \cD(\cL))$ be the generator of a contractive $C_0$-semigroup, and $P \in \cB(\cX)$ be a contractive projection such that $P\cL^{k}$ and $\cL P$ are bounded for all $k\in\{1,...,K+1\}$. Then, there is a sequence $\{E_{k}\}_{k=1}^K \subset \cB(\cX)$, independent of $n$, such that
				\begin{equation*}
					\Bigl\|(Pe^{\frac{t}{n}\cL})^n - e^{tP\cL P}P - \sum_{k=1}^{K} \frac{1}{n^k} E_k\Bigr\|_{\cX \rightarrow \cX} = \cO\Bigl(\frac{1}{n^{K+1}}\Bigr)\,.
				\end{equation*}
			\end{thm}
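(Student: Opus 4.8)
The plan is to prove the statement by induction on $K$, iterating the two auxiliary lemmas exactly $K+1$ times and carrying an explicit integral representation of the remainder throughout. The base case $K=1$ is furnished by \Cref{prop:projective-zeno-second-order}: rearranging that estimate exhibits $E_1 = \frac{t^2}{2}\int_0^1 e^{\tau_1 P\cL P}P\cL[\cL,P]Pe^{t(1-\tau_1)P\cL P}\,d\tau_1 + e^{tP\cL P}P\cL(\1-P)$ as an $n$-independent operator with remainder $\cO(1/n^2)$. So it remains to set up an inductive scheme that peels off one additional power of $1/n$ per round while preserving a structured, explicit form of the remainder.

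To make \Cref{lem:chernoff} applicable I would first factor
\[
 \bigl(Pe^{\frac{t}{n}\cL}\bigr)^n = \bigl(Pe^{\frac{t}{n}\cL}P\bigr)^{n-1}\,Pe^{\frac{t}{n}\cL},
\]
using $P^2=P$. The left factor, restricted to the closed invariant subspace $P\cX$, is the $(n-1)$-th power of the contraction $C(s) := Pe^{s\cL}P$ with $s=t/n$ and $C(0)=\1|_{P\cX}$, so \Cref{lem:chernoff} would replace it by $e^{(n-1)(C(s)-\1)}$ up to an explicit integral carrying the prefactor $(n-1)s^2 = \cO(1/n)$. \Cref{lem:dunford-segal}, applied with $\tfrac{t}{s_1}=n-1$ and $s_2 = t/n$, then replaces $e^{(n-1)(C(s)-\1)}$ by the Zeno semigroup $e^{tC'(0)}=e^{tP\cL P}$, again up to an explicit integral of order $\cO(1/n)$, the mismatch $s_1\neq s_2$ being absorbed exactly by the correction term $\tfrac{s_2-s_1}{s_2^2}C'$ in that lemma. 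The trailing boundary factor $Pe^{\frac{t}{n}\cL}$ I would expand separately by a finite Taylor expansion $Pe^{\frac{t}{n}\cL}=\sum_{j=0}^{K}\tfrac{1}{n^j}\tfrac{t^j}{j!}P\cL^j + \cO(1/n^{K+1})$, legitimate precisely because $P\cL^j$ is bounded for $j\le K+1$; its leading piece $P$ reconstitutes $e^{tP\cL P}P$, while $\tfrac{t}{n}P\cL(\1-P)$ reproduces the second summand of $E_1$ and the higher orders feed into later $E_k$.

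For the inductive step I would assume the expansion is known up to order $K-1$ with the remainder presented as a finite sum of integrals whose integrands are products of contraction semigroups $e^{\bullet(C(s)-\1)}$, powers $C_\tau^{n-1}(s)$, and bounded derivative factors $C'(s\tau),C''(s\tau),\dots$, carrying an overall prefactor of size $\cO(1/n^{K})$. To extract the order $1/n^{K}$ term I would (i) Taylor-expand each bounded derivative factor in $s=t/n$, every expansion generating one further power of $1/n$ together with an $n$-independent coefficient built from the shielded blocks $P\cL^{m}P$, $P\cL^m$, $\cL P$, and (ii) re-apply \Cref{lem:chernoff} and \Cref{lem:dunford-segal} to replace the surviving $C_\tau^{n-1}(s)$ and $e^{\bullet(C(s)-\1)}$ factors by $e^{\bullet P\cL P}$, each replacement costing one more explicit integral of relative order $\cO(1/n)$. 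Collecting all resulting $n$-independent integrals of total order $1/n^{K}$ into a single operator $E_K\in\cB(\cX)$, and all genuinely higher-order pieces into the new remainder, completes the step.

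The main obstacle is twofold, and both parts stem from the unboundedness of $\cL$. First, one must guarantee that every occurrence of $\cL$ in the iterated integrands is \emph{shielded} by a projection, appearing only as $P\cL^m$ (with $m\le K+1$) or as $\cL P$, so that all integrands are bounded uniformly in $n$ and the Bochner integrals are well defined; this is exactly what the hypotheses on $P\cL^k$ and $\cL P$ supply, and a careful accounting shows that $K+1$ rounds consume derivatives up to $P\cL^{K+1}$, which explains the stated range of $k$. Second, the combinatorial bookkeeping is delicate: each round multiplies the number of terms, and I would need to verify that the prefactors — one factor $\cO(1/n)$ per Chernoff/Dunford–Segal replacement, plus one power of $s=t/n$ per Taylor step — combine so that precisely the order-$1/n^{k}$ contributions are $n$-independent for each $k\le K$ while everything else is $\cO(1/n^{K+1})$. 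Establishing the uniform-in-$n$ bound on the final remainder — using only the contractivity of the semigroups $e^{\bullet P\cL P}$ and $e^{\bullet(C(s)-\1)}$ together with the operator-norm bounds on $P\cL^m P$, $P\cL^m$, and $\cL P$ — is the technical heart of the argument.
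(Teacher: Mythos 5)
Your proposal is correct and follows essentially the same route as the paper's proof: induction on $K$ with \Cref{prop:projective-zeno-second-order} as the base case, the splitting $(Pe^{\frac{t}{n}\cL})^n=(Pe^{\frac{t}{n}\cL}P)^{n-1}Pe^{\frac{t}{n}\cL}$ with a finite Taylor expansion of the boundary factor, and an inductive step that Taylor-expands the bounded derivative factors while re-applying \Cref{lem:chernoff} and \Cref{lem:dunford-segal} to the remaining $n$-dependent semigroup and power factors, keeping every occurrence of $\cL$ shielded as $P\cL^m$ or $\cL P$. The combinatorial bookkeeping you flag as the technical heart is exactly what the paper formalizes via its families $X_n^k$, $Y_n^k$, $Z_n^k$ and the index matrices $A_1,\dots,A_4$ controlling the growth in the number of terms per iteration.
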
		
			\begin{proof}[Proof-sketch]
				A detailed proof can be found in \Cref{appx-sec:projective-zeno-main}. The proof is based on induction over $K$. \Cref{prop:projective-zeno-second-order} provides the `induction start' for $K=1$, and then it is shown that the procedure used in the proof of \Cref{prop:projective-zeno-second-order} can be iterated under certain boundedness assumptions.
			\end{proof}
			
			To clarify the relation to the bounded generator case, we briefly present a simplified version.
			\begin{cor}[Bounded generator]\label{cor:projective-zeno-higher-order-bounded}
				Let $\cL \in \cB(\cX)$ generate a contractive semigroup, and $P \in \cB(\cX)$ be a contractive projection. Then, there is a sequence of operators $\{E_{k}\}_{k=1}^K \subset \cB(\cX)$, independent of $n$, such that for all $n \geq 1$ and $t \geq 0$,
				\begin{equation*}
					\Bigl\|(Pe^{\frac{t}{n}\cL})^n - e^{tP\cL P}P - \sum_{k=1}^{K} \frac{1}{n^k} E_k\Bigr\|_{\cX \rightarrow \cX} = \cO\Bigl(\frac{1}{n^{K+1}}\Bigr)\,.
				\end{equation*}
			\end{cor}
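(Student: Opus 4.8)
The plan is to obtain the statement as a direct specialisation of \Cref{thm:projective-zeno-higher-order}, whose hypotheses become automatic once $\cL$ is bounded. Concretely, I would first observe that for $\cL \in \cB(\cX)$ submultiplicativity of the operator norm gives $\|\cL^{k}\|_{\cX\rightarrow\cX} \leq \|\cL\|_{\cX\rightarrow\cX}^{k} < \infty$ for every $k \in \N$. Since $P \in \cB(\cX)$ is a contractive projection, the composed operators $P\cL^{k}$ and $\cL P$ are therefore bounded as well, with $\|P\cL^{k}\|_{\cX\rightarrow\cX} \leq \|\cL\|_{\cX\rightarrow\cX}^{k}$ and $\|\cL P\|_{\cX\rightarrow\cX} \leq \|\cL\|_{\cX\rightarrow\cX}$. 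In particular, for any fixed $K \in \N$ the boundedness requirements of \Cref{thm:projective-zeno-higher-order}, namely that $P\cL^{k}$ and $\cL P$ be bounded for all $k \in \{1,\dots,K+1\}$, are satisfied.

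With these verifications in place, I would simply invoke \Cref{thm:projective-zeno-higher-order}: since $\cL$ generates a contractive $C_0$-semigroup (here a uniformly continuous one, as a bounded generator is equivalent to norm-continuity of $(e^{t\cL})_{t\geq 0}$) and $P$ is a contractive projection, the theorem supplies the same sequence $\{E_k\}_{k=1}^{K} \subset \cB(\cX)$, independent of $n$, for which the desired $\cO(1/n^{K+1})$ estimate holds. Tracking the dependence of the implicit constant through the theorem's proof — which ultimately rests on the explicit integral bound of \Cref{prop:projective-zeno-second-order} and its iteration — shows that the constant may be taken to be a polynomial in $t$ with coefficients depending only on $\|\cL\|_{\cX\rightarrow\cX}$, so that the bound is valid for all $n \geq 1$ and $t \geq 0$ as stated.

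There is essentially no obstacle here: the corollary is precisely the bounded-generator instance of the preceding theorem, and the only thing to check is that no domain or unboundedness issue survives. This is exactly the content of the remark following \Cref{prop:projective-zeno-second-order}, and the argument above merely makes that implication explicit at the level of the full higher-order expansion.
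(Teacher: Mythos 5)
Your proposal is correct and matches the paper's intent exactly: the corollary is stated as the immediate specialisation of \Cref{thm:projective-zeno-higher-order} to bounded $\cL$, where (as the remark after \Cref{prop:projective-zeno-second-order} already notes) the hypotheses that $P\cL^{k}$ and $\cL P$ be bounded hold automatically by submultiplicativity of the operator norm. No separate proof is given in the paper because none is needed beyond the verification you carried out.
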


	\subsection{Generalized quantum Zeno effect}\label{subsec:generalized-zeno}
		With the expansion of the projective Zeno effect at hand (\Cref{thm:projective-zeno-higher-order}), we can use the holomorphic functional calculus \cite[Thm.~2.3.1-3]{Simon.2015} to extend the result to the generalized quantum Zeno effect. Before presenting the auxiliary result, we first discuss the necessary assumptions: We assume that $M\in\cB(\cX)$ satisfies the uniform power convergence assumption:
		\begin{equation}\label{eq:power-convergence}
			\Bigl\|M^n - \sum_{j=1}^J \lambda_j^n P_j \Bigr\|_{\cX \rightarrow \cX} \leq c \delta^n
		\end{equation}
		for a constant $c \geq 0$, $n\in\N$, peripheral eigenvalues $\{\lambda_j\}_{j=1}^J\subset\partial\D_1=\{z\in\C\,|\,|z|=1\}$ on the unit sphere, and eigenprojections $\{P_j\}_{j=1}^J$. The labeling is motivated by the fact that the power convergence above is equivalent to the spectral gap condition \cite[Prop.~3.1]{Becker.2021}:
			\begin{equation}\label{eq:spectral-gap}
				\spec(M) \subset \D_{\delta} \cup \{\lambda_1,...,\lambda_J\}
			\end{equation}
			with the ball $\D_{\delta}=\{z\in\C\,|\,|z|\leq\delta\}$ and eigenprojections defined by the holomorphic functional calculus:
			\begin{equation}\label{eq:eigenprojections}
				P_j = \frac{1}{2\pi i} \oint_{\Gamma_j} R(z, M) dz
			\end{equation}
			and zero nilpotent parts $N_j = (\lambda_j - M) P_j$. Here, $\Gamma_j$ is the curve 
			\begin{equation*}
				\Gamma_j:[0,1]\rightarrow\C,\Gamma_j(s)=\lambda_j+\min_{i\in\{1,...,J\}}\Bigl\{\frac{1-\delta}{3},\frac{|\lambda_i-\lambda_j|}{3}\Bigr\}e^{2\pi i s}\,,
			\end{equation*}
			which separates the peripheral eigenvalues $\lambda_j$ from the rest of the spectrum.
		\begin{figure}[t]
			\centering
				\begin{tikzpicture}
					\draw[dashed,tumivory] (0,0) ellipse (1.4cm and 1.4cm);
					\draw[dashed,mygreen] (0,0) ellipse (1.15cm and 1.15cm);
					\filldraw (0,0) ellipse (0.03cm and 0.03cm);
					
					\filldraw (15:1.4) ellipse (0.03cm and 0.03cm);
					\draw[dashed,myorange] (15:1.4) ellipse (0.2cm and 0.2cm);
					\draw[myorange] (15:1.9) node {$\Gamma_1$};
					
					\filldraw (70:1.4) ellipse (0.03cm and 0.03cm);
					\draw[dashed,myorange] (70:1.4) ellipse (0.2cm and 0.2cm);
					\draw[myorange] (70:1.9) node {$\Gamma_2$};
					
					\filldraw (110:1.4) ellipse (0.03cm and 0.03cm);
					\draw[dashed,myorange] (110:1.4) ellipse (0.2cm and 0.2cm);
					\draw[myorange] (110:1.9) node {$\Gamma_3$};
					
					\filldraw (130:1.4) ellipse (0.03cm and 0.03cm);
					\draw[dashed,myorange] (130:1.4) ellipse (0.2cm and 0.2cm);
					\draw[myorange] (130:1.9) node {$\Gamma_4$};
					
					\filldraw (170:1.4) ellipse (0.03cm and 0.03cm);
					\draw[dashed,myorange] (170:1.4) ellipse (0.2cm and 0.2cm);
					\draw[myorange] (170:1.9) node {$\Gamma_5$};
					
					\filldraw (195:1.4) ellipse (0.03cm and 0.03cm);
					\draw[dashed,myorange] (195:1.4) ellipse (0.2cm and 0.2cm);
					\draw[myorange] (195:1.9) node {$\Gamma_6$};
					
					\filldraw (260:1.4) ellipse (0.03cm and 0.03cm);
					\draw[dashed,myorange] (260:1.4) ellipse (0.2cm and 0.2cm);
					\draw[myorange] (260:1.9) node {$\Gamma_7$};
					
					\filldraw (300:1.4) ellipse (0.03cm and 0.03cm);
					\draw[dashed,myorange] (300:1.4) ellipse (0.2cm and 0.2cm);
					\draw[myorange] (300:1.9) node {$\Gamma_8$};
					
					\filldraw (350:1.4) ellipse (0.03cm and 0.03cm);
					\draw[dashed,myorange] (350:1.4) ellipse (0.2cm and 0.2cm);
					\draw[myorange] (350:1.9) node {$\Gamma_J$};
					
					\draw [dotted,myorange,domain=310:340] plot ({1.9*cos(\x)}, {1.9*sin(\x)});
					
					\draw[<->] (0.0,0.05)--(0,0.95);
					\draw[<->,mygreen] (0.0433,-0.025)--(0.9526cm,-0.55cm);
					
					\fill[pattern=my north east lines] (0,0) ellipse (1cm and 1cm);
					\filldraw[white] (-0.35,0.5) ellipse (0.15cm and 0.25cm);
					\draw (-0.35,0.5) node {$\delta$};
					\filldraw[white] (0.35,-0.5) ellipse (0.15cm and 0.25cm);
					\draw[mygreen] (0.35,-0.5) node {$\tilde{\delta}$};
					\filldraw[white] (-1.06,-1.06) ellipse (0.15cm and 0.3cm);
					\draw (-1.06,-1.06) node [mygreen]{$\gamma$};
					
					\draw (2,1.2) node {$\spec(M)$};
				\end{tikzpicture}
			\caption{Spectral separation of the peripheral eigenvalues from the strictly contractive part of the spectrum; adapted from \cite[Fig.~3]{Moebus.2022}.}\label{fig1}
		\end{figure}
		We also assume that $M\cL^k$ and $\cL M$ are bounded for all $k \in \{1, \dots, K\}$. Under the above assumptions, \cite[Thm.~6.1]{Moebus.2022} provides the following convergence:
		\begin{equation}\label{eq:generalized-Zeno-effect}
			\Bigl\|(Me^{\frac{t}{n}\cL})^n - \sum_{j=1}^{J}\lambda_j^{n}e^{tP_j\cL P_j}P_j\Bigr\|_{\cX \rightarrow \cX} = \cO\Bigl(\frac{1}{n}\Bigr)\,.
		\end{equation}
		Note that the projections $P_j$ are, by assumption, contractive. This follows from the average operator defined in Equation 4.3 of \cite{Becker.2021}, which is contractive, and thus, its limit $P_j$ inherits this property.
		
		\subsubsection{Auxiliary Lemma}
			Next, we present a description of the perturbed eigenprojection of an isolated part of the spectrum using the holomorphic functional calculus (cf.~\cites[Lem.~6.2]{Moebus.2022}[Lem.~5.3]{Becker.2021}).
			\begin{lem}\label{lem:functional-calculus}
				Let $M \in \cB(\cX)$ satisfy the spectral gap assumption \eqref{eq:spectral-gap}, and let $\Gamma$ be a simple, closed, rectifiable curve\footnote{A simple, closed, rectifiable curve is a continuous, non-intersecting loop with finite length.} that separates, but does not intersect, the gapped spectrum. Let $(\cL, \cD(\cL))$ be the generator of a contractive $C_0$-semigroup, with $M\cL^{k}$ bounded for all $k \in \{1, \dots, K+1\}$. Then, there exists an $\epsilon > 0$ such that for all $t \in [0, \epsilon]$,
				\begin{equation*}
					P(t) \coloneqq \frac{1}{2\pi i} \oint_{\Gamma}R(z,Me^{t\cL})\,dz
				\end{equation*}
				defines a projection onto the eigenspace enclosed by $\Gamma$. Moreover, there exists a sequence of bounded operators $\{F_k\}_{k=1}^K \subset \cB(\cX)$ such that for $t \rightarrow 0$
				\begin{equation*}
					\Bigl\|P(t) - P(0) - \sum_{k=1}^{K}t^kF_k\Bigr\|_{\cX\rightarrow \cX} = \cO(t^{K+1})\,.
				\end{equation*}
			\end{lem}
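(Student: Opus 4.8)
The plan is to view $P(t)$ as the Riesz spectral projection of the bounded perturbation $Me^{t\cL}$ of $M$ and to expand it through the Neumann series of the resolvent, keeping every operator bounded by always placing $M$ to the left of each power of $\cL$. First I would split $Me^{t\cL} = M + A(t)$ with $A(t) \coloneqq M(e^{t\cL}-\1)$, and apply Taylor's theorem with integral remainder to the semigroup on $\cD(\cL^{K+1})$, left-multiplied by $M$:
\begin{equation*}
  A(t) = \sum_{k=1}^{K}\frac{t^k}{k!}M\cL^k + \frac{t^{K+1}}{K!}\int_0^1 (1-s)^K\,(M\cL^{K+1})\,e^{st\cL}\,ds\,.
\end{equation*}
The crucial point is that each $M\cL^k$ ($1\le k\le K+1$) is bounded by hypothesis, so it may be read as a single bounded operator; composed on the right with the contraction $e^{st\cL}$, the remainder integrand is norm-bounded by $\|M\cL^{K+1}\|_{\cX\rightarrow\cX}$. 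Hence $A(t)$ is bounded with $\|A(t)\|_{\cX\rightarrow\cX} = \cO(t)$, and the display provides a polynomial-in-$t$ expansion of $A(t)$ with bounded coefficient operators and an $\cO(t^{K+1})$ norm remainder.

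Next I would run a resolvent perturbation argument. Since $\Gamma\subset\mathrm{res}(M)$ is compact and $z\mapsto R(z,M)$ is continuous, $r\coloneqq\sup_{z\in\Gamma}\|R(z,M)\|_{\cX\rightarrow\cX}$ is finite. Choosing $\epsilon>0$ so small that $\|A(t)\|_{\cX\rightarrow\cX}\,r\le q<1$ for all $t\in[0,\epsilon]$, the Neumann series
\begin{equation*}
  R(z,Me^{t\cL}) = \sum_{m=0}^{\infty}\bigl(R(z,M)A(t)\bigr)^m R(z,M)
\end{equation*}
converges in operator norm uniformly in $z\in\Gamma$; in particular $\Gamma\subset\mathrm{res}(Me^{t\cL})$, so $P(t)$ is well defined. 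Standard holomorphic functional calculus shows $P(t)$ is a projection, and the uniform convergence above makes $t\mapsto P(t)$ norm-continuous; since projections that are close in norm are similar and hence of equal rank, $P(t)$ projects onto the eigenspace of $Me^{t\cL}$ enclosed by $\Gamma$, of the same dimension as $P(0)$.

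It then remains to extract the expansion. Substituting the polynomial part of $A(t)$ into the Neumann series and multiplying out, I collect all contributions of fixed total $t$-degree. Because $\|R(z,M)A(t)\|\le q<1$, the geometric tail for $m>K$ together with the leftover high-degree pieces is $\cO(t^{K+1})$ uniformly in $z\in\Gamma$, yielding
\begin{equation*}
  R(z,Me^{t\cL}) = R(z,M) + \sum_{k=1}^{K}t^k\,G_k(z) + \cO(t^{K+1})\,,
\end{equation*}
where each $G_k(z)$ is a finite sum of products $R(z,M)(M\cL^{k_1})R(z,M)\cdots(M\cL^{k_m})R(z,M)$ with $k_1+\dots+k_m=k$ and $k_i\ge1$, hence bounded and holomorphic in $z$ on a neighbourhood of $\Gamma$. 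Integrating over the rectifiable (finite-length) curve $\Gamma$ and setting $F_k\coloneqq\frac{1}{2\pi i}\oint_\Gamma G_k(z)\,dz$, the uniform $\cO(t^{K+1})$ control of the integrand, multiplied by the length of $\Gamma$, gives $\bigl\|P(t)-P(0)-\sum_{k=1}^K t^k F_k\bigr\|_{\cX\rightarrow\cX}=\cO(t^{K+1})$, as claimed.

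The main obstacle is the unboundedness of $\cL$: since only the products $M\cL^k$ with $k\le K+1$ are assumed bounded, I cannot expand $e^{t\cL}$ to arbitrary order, and I must keep $M$ to the left of every power of $\cL$ throughout, both in $A(t)$ and in each $G_k(z)$, so that all operators appearing remain bounded. This is exactly what fixes the Taylor order at $K+1$ (explaining the hypothesis) and forces the remainder into the form $(M\cL^{K+1})e^{st\cL}$, which is bounded precisely because $M\cL^{K+1}$ is a bounded operator and the semigroup is contractive. The remaining care is to make the Neumann-series remainder uniform in $z$ over $\Gamma$ before integrating, which the uniform bound $q<1$ secures.
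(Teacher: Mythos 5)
Your proposal is correct, and its engine is the same as the paper's: write $Me^{t\cL}$ as $M$ plus a perturbation that is $\cO(t)$ in norm and whose expansion in powers of $t$ has the bounded coefficients $M\cL^k$, expand $R(z,Me^{t\cL})$ around $R(z,M)$ uniformly on $\Gamma$, and integrate over the contour. The implementations differ in two places. First, the paper obtains the spectral separation (hence well-definedness of $P(t)$) by citing a semicontinuity-of-the-spectrum lemma from earlier work, whereas your Neumann series gives $\Gamma\subset\mathrm{res}(Me^{t\cL})$ as a by-product of its own convergence, making the argument self-contained. Second, the paper never sums a geometric series: it keeps only the first-order splitting $Me^{t\cL}=M+tB(t)$ with $B(t)=\int_0^1 M\cL e^{t\tau\cL}\,d\tau$ and iterates the second resolvent identity $R(t)=R-tR(t)B(t)R$, Taylor-expanding $B$ via the fundamental theorem of calculus at each pass, so every remainder carries the perturbed resolvent $R(t)$ and only its uniform boundedness on $\Gamma$ is used; you instead Taylor-expand $A(t)=M(e^{t\cL}-\1)$ to order $K$ in one stroke and truncate the full Neumann series, which buys an explicit combinatorial form for the coefficients $F_k$ (contour integrals of sums of products $R(M\cL^{k_1})R\cdots(M\cL^{k_m})R$ over compositions $k_1+\cdots+k_m=k$) at the cost of the smallness condition $\sup_{z\in\Gamma}\|R(z,M)\|\,\|A(t)\|<1$, which is harmless for small $t$. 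Two points to make explicit in a write-up: the Taylor formula for $A(t)$ holds first for $x\in\cD(\cL^{K+1})$ (where $\cL^{K+1}e^{st\cL}x=e^{st\cL}\cL^{K+1}x$, so the remainder really is the bounded extension of $M\cL^{K+1}$ composed with a contraction) and then extends to all of $\cX$ by density and boundedness of both sides; and the statement that $P(t)$ projects onto the spectral subspace enclosed by $\Gamma$ is already the content of the holomorphic functional calculus, so your norm-continuity/equal-rank remark is a pleasant addition but not needed.
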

			\begin{proof}
				A detailed proof can be found in \Cref{appx-sec:generalized-zeno-auxiliary}.
			\end{proof}
		
		\subsubsection{Higher order expansion of the quantum Zeno effect}
			With the help of the above result, we can follow the proof strategy of \cite{Moebus.2019} to establish the following result.
			\begin{thm}[Generalized Zeno higher-order expansion]\label{thm:generalized-zeno}
				Let $(\cL, \cD(\cL))$ be the generator of a contractive $C_0$-semigroup, and let $M \in \cB(\cX)$ be a contraction satisfying the assumption of uniform power convergence \eqref{eq:power-convergence}. Assume that $M\cL^{k}$ and $\cL M$ are bounded for all $k \in \{1, \dots, K+1\}$. Then, there exists a sequence of operators $\{E_{k,j}\}_{k,j=1}^{K,J} \subset \cB(\cX)$, independent of $n$, such that
				\begin{equation*}
					\Bigl\|(Me^{\frac{t}{n}\cL})^n - \sum_{j=1}^{J}\lambda_j^{n}e^{tP_j\cL P_j}P_j - \sum_{k=1}^{K} \frac{1}{n^k} \sum_{j=1}^{J}\lambda_j^nE_{k,j}\Bigr\|_{\cX \rightarrow \cX} = \cO\Bigl(\frac{1}{n^{K+1}}\Bigr)\,.
				\end{equation*}
			\end{thm}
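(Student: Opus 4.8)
The plan is to reduce the generalized statement to the projective expansion (\Cref{thm:projective-zeno-higher-order}) by means of the spectral block decomposition of $N \coloneqq Me^{\frac{t}{n}\cL}$ supplied by the holomorphic functional calculus. Fix the curves $\Gamma_j$ separating the peripheral eigenvalues $\lambda_j$ as in \eqref{eq:eigenprojections}. Since $Me^{\frac{t}{n}\cL}$ is a norm-perturbation of $M$ of size $\cO(t/n)$ (here the boundedness of $\cL M$ enters), \Cref{lem:functional-calculus} applies with $t$ replaced by $t/n$: for all $n$ large enough that $t/n \le \epsilon$, the contour integrals
\[
	P_j\Bigl(\tfrac{t}{n}\Bigr) \coloneqq \frac{1}{2\pi i}\oint_{\Gamma_j} R\bigl(z, Me^{\frac{t}{n}\cL}\bigr)\,dz
\]
define commuting projections that reduce $N$, satisfy $P_i(\tfrac tn)P_j(\tfrac tn) = \delta_{ij}P_j(\tfrac tn)$, and admit the expansion $P_j(\tfrac tn) = P_j + \sum_{k=1}^{K}(\tfrac tn)^k F_{k,j} + \cO(n^{-(K+1)})$ with $n$-independent $F_{k,j}\in\cB(\cX)$.

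First I would block-diagonalize. Writing $Q(\tfrac tn) \coloneqq \1 - \sum_{j=1}^J P_j(\tfrac tn)$ for the complementary (bulk) projection, the commutation and orthogonality of the spectral projections give $N^n = \sum_{j=1}^J \bigl(NP_j(\tfrac tn)\bigr)^n + \bigl(NQ(\tfrac tn)\bigr)^n$. The spectrum of $N$ restricted to $\operatorname{ran}Q(\tfrac tn)$ stays inside a disc of radius $\tilde\delta<1$ uniformly for small $t/n$ (a consequence of \eqref{eq:spectral-gap} and continuity of the resolvent), so $\bigl\|(NQ(\tfrac tn))^n\bigr\|_{\cX\rightarrow\cX}\le c\,\tilde\delta^{\,n} = \cO(n^{-(K+1)})$ and the bulk term is absorbed into the error.

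It then remains to expand each peripheral block. On $\operatorname{ran}P_j(\tfrac tn)$ I would factor out the eigenvalue and set $C \coloneqq \lambda_j^{-1}Me^{\frac{t}{n}\cL}P_j(\tfrac tn)$, which is, on the invariant subspace, a contraction converging to the block identity $P_j(\tfrac tn)$. Applying the modified Chernoff Lemma (\Cref{lem:chernoff}) in the block gives $\bigl(NP_j(\tfrac tn)\bigr)^n = \lambda_j^n\bigl(e^{\,n(C-P_j(\tfrac tn))} + \text{error}\bigr)$, while the Dunford–Segal approximation (\Cref{lem:dunford-segal}) identifies the limiting block generator as $tP_j\cL P_j$: indeed $MP_j = \lambda_j P_j$ (the nilpotent part $N_j$ vanishes), hence $\lambda_j^{-1}P_j M\cL P_j = P_j\cL P_j$ and $n\,(C - P_j(\tfrac tn)) \to tP_j\cL P_j$. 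This reproduces the effective Zeno semigroup $e^{tP_j\cL P_j}P_j$ at leading order, exactly as in the $1/n$ estimate \eqref{eq:generalized-Zeno-effect}, and iterating the Chernoff/Dunford–Segal machinery in the block — precisely the induction underlying \Cref{thm:projective-zeno-higher-order} — produces the higher-order corrections in powers of $1/n$.

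The \emph{main obstacle} is that, unlike in the projective case, the reducing projection $P_j(\tfrac tn)$ is not fixed but drifts with $n$, so \Cref{thm:projective-zeno-higher-order} cannot be quoted verbatim. The remedy is to interleave its expansion $P_j(\tfrac tn) = P_j + \sum_k (\tfrac tn)^k F_{k,j} + \cO(n^{-(K+1)})$ from \Cref{lem:functional-calculus} with the block expansion, substituting the $F_{k,j}$ wherever $P_j(\tfrac tn)$ appears (in the block identity, in the generator $n(C - P_j(\tfrac tn))$, and in the prefactors) and re-collecting the result into a single power series in $1/n$ with fixed coefficients. Care is needed to ensure each resulting coefficient is a bounded, $n$-independent operator: this is where the hypotheses that $M\cL^k$ is bounded for $k\le K+1$ are consumed, controlling the $k$-fold derivatives of $s \mapsto Me^{s\cL}$ that enter both the functional-calculus expansion and the Dunford–Segal remainders. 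Collecting the $\lambda_j^n/n^k$ terms over all $j$ and $k$ then defines the operators $E_{k,j}\in\cB(\cX)$ and yields the claimed $\cO(n^{-(K+1)})$ bound.
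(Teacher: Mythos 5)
Your proposal is correct and follows essentially the same route as the paper: the same splitting of $(Me^{\frac{t}{n}\cL})^n$ into peripheral blocks $\bigl(\overline{\lambda}_j P_j(\tfrac{t}{n})Me^{\frac{t}{n}\cL}P_j(\tfrac{t}{n})\bigr)^n$ plus an exponentially decaying bulk term, the same adaptation of \Cref{lem:chernoff} and \Cref{lem:dunford-segal} to the drifting projection (replacing $C-\1$ by $C(s)-P(s)$, so that $\widetilde{C}'=C'-P'=P\cL P$), and the same interleaving of the expansion of $P_j(\tfrac{t}{n})$ from \Cref{lem:functional-calculus} into the induction underlying \Cref{thm:projective-zeno-higher-order}. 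The only cosmetic difference is that the paper represents the bulk contribution as the contour integral $\frac{1}{2\pi i}\oint_{\gamma}z^nR\bigl(z,Me^{\frac{1}{n}\cL}\bigr)\,dz$ rather than as the power of $N$ restricted to the complementary spectral subspace.
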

			\begin{proof}[Proof-sketch]
				A detailed proof can be found in \Cref{appx-sec:generalized-zeno-main}. The proof uses the holomorphic functional calculus to define the perturbed eigenprojections on the unit circle (see \Cref{lem:functional-calculus}). This is then used to decompose the Zeno sequence into a part related to the spectrum in $\D_{\delta}$, which converges to $0$, and another part, for which we follow the proof of \Cref{thm:projective-zeno-higher-order}. Finally, the expansion of the perturbed eigenprojection is used to conclude the proof.
			\end{proof}

\section{Multi-product Zeno effect}\label{sec:multi-product}
	With the above proven higher order error expansion for the generalized quantum Zeno effect, we can construct a multi-product formula for the quantum Zeno effect similar to the multi-product trotter formulas \cite{Childs.2021,Childs.2021}. Due to the strong quantum mechanical motivation, we construct the Multi-product Zeno effect for quantum operation $M$ and quantum dynamical $C_0$-semigroups $e^{t\cL}$ defined on trace-class operators $\cT_1$ of an separable Hilbert space, which are by definition contractions. However, the the following also works for operators, as discussed in \Cref{sec:higher-order-expansion}.  
	
	\subsection{Multi-product scheme}\label{subsec:multi-product-scheme}
		For a given vector of real values $c \in \R^{K+1}$, we define the multi-product Zeno effect by
		\begin{equation}\label{eq:multi-product-formula}
			\sum_{\ell=1}^{K+1}c_\ell\left(Me^{\frac{t}{n_\ell}\cL}\right)^{n_\ell}\,.
		\end{equation}
		The goal is to choose the coefficients such that the higher-order error terms found in \Cref{thm:generalized-zeno} cancel out. This is first demonstrated using the first-order expansion of the projective Zeno effect.
		
		\subsubsection{Projective $2$-product Zeno effect}
			Recall that \Cref{prop:projective-zeno-second-order} proves that for any projective quantum operation $P$, with $\overline{P} = (\1 - P)$, and a quantum dynamical $C_0$-semigroup $(e^{t\cL})_{t \geq 0}$, where $P\cL$, $\cL P$, and $P\cL^2$ are bounded, there is an explicit bound such that for $n \to \infty$ and $t \geq 0$
			\begin{equation*}
				\begin{aligned}
					&\left\|(Pe^{\frac{t}{n}\cL})^n - e^{tP\cL P}P
					- \frac{t^2}{2n}\int_0^1 e^{\tau_1 P\cL P} P\cL[\cL, P] P e^{t(1 - \tau_1)P\cL P} d\tau_1\right.\\
					&\qquad\left.{}- \frac{t}{n}e^{tP\cL P}P\cL\overline{P} \right\|_{\cX \rightarrow \cX}
					= \cO\left(\frac{1}{n^2}\right)\,.
				\end{aligned}
			\end{equation*}
			Then, we search for a $c\in\R^2$ such that the multi-product formula for $K = 1$ and $n_\ell = \ell n$ satisfies
			\begin{equation*}
				\sum_{\ell=1}^{2}c_\ell(Pe^{\frac{t}{n\ell}\cL})^{n\ell} - c_\ell e^{tP\cL P}P - c_\ell\frac{1}{n\ell}E_1 = \cO\left(\frac{1}{n^2}\right)
			\end{equation*}
			where
			\begin{equation*}
				E_1 \coloneqq \frac{t^2}{2}\int_0^1 e^{\tau_1 P\cL P} P\cL[\cL, P] P e^{t(1 - \tau_1)P\cL P} d\tau_1 - te^{tP\cL P}P\cL\overline{P}\,.
			\end{equation*}
			This is satisfied if the zeroth order is unchanged and the first-order error cancels or equivalently if the following system of linear equations is satisfied:
			\begin{equation*}
				\begin{pmatrix}
					1 & 1 \\
					\frac{1}{1} & \frac{1}{2}
				\end{pmatrix}
				\begin{pmatrix}
					c_1 \\
					c_2
				\end{pmatrix}
				=
				\begin{pmatrix}
					1 \\
					0
				\end{pmatrix}
			\end{equation*}
			which is solved by $c_1 = -1$ and $c_2 = 2$. This gives the following $2$-product scheme:

			\begin{cor}[2-product Zeno effect]
				Let $(\cL,\cD(\cL))$ be the generator of a quantum dynamical $C_0$-semigroup, $P \in \cB(\cT_1)$ be any projective quantum operation such that $P\cL, P\cL^2, \cL P \in \cB(\cX)$ are bounded, $O \in \cB(\cH)$ be an observable, and $\rho \in \cT_1(\cH)$. Then for all $n \geq 1$ and $t \geq 0$, we have
				\begin{equation*}
					\begin{aligned}
						\Bigl\|2(Pe^{\frac{t}{2n}\cL}P)^{2n} - (Pe^{\frac{t}{n}\cL}P)^n - e^{tP\cL P}P\Bigr\|_{1\rightarrow 1} &\leq \cO\left(\frac{1}{n^2}\right)\,,
					\end{aligned}
				\end{equation*}
				in particular,
				\begin{equation*}
					\begin{aligned}
						\Bigl|2\tr[O(Pe^{\frac{t}{2n}\cL}P)^{2n}(\rho)] - \tr[O(Pe^{\frac{t}{n}\cL}P)^n(\rho)] - \tr[Oe^{tP\cL P}P(\rho)]\Bigr| &\leq \cO\left(\frac{1}{n^2}\right)\,.
					\end{aligned}
				\end{equation*}
			\end{cor}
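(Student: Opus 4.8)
The plan is to read the corollary off \Cref{prop:projective-zeno-second-order} via the linear-combination (multi-product) argument, so no new analytic input is required. First I would record the algebraic reduction linking the sandwiched products of the corollary to the one-sided products of the proposition: since $P^2 = P$, an easy induction on $m$ shows
\begin{equation*}
	(Pe^{s\cL}P)^m = (Pe^{s\cL})^m P \qquad \text{for all } m \in \N,\ s \geq 0,
\end{equation*}
the inductive step using $P \cdot P = P$. Setting $A_n := 2(Pe^{\frac{t}{2n}\cL})^{2n} - (Pe^{\frac{t}{n}\cL})^n - e^{tP\cL P}P$ and noting $e^{tP\cL P}P \cdot P = e^{tP\cL P}P$, the operator inside the corollary's norm is exactly $A_n P$; since $\|P\|_{1\rightarrow1} \leq 1$ we get $\|A_n P\|_{1\rightarrow1} \leq \|A_n\|_{1\rightarrow1}$, so it suffices to prove the one-sided estimate $\|A_n\|_{1\rightarrow1} = \cO(1/n^2)$.

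Next I would apply \Cref{prop:projective-zeno-second-order} twice for the common total time $t$, once with $n$ steps of size $t/n$ and once with $2n$ steps of size $t/(2n)$, giving
\begin{align*}
	(Pe^{\frac{t}{n}\cL})^n &= e^{tP\cL P}P + \tfrac1n E_1 + R_n, & \|R_n\|_{1\rightarrow1} &\leq \tfrac{1}{2n^2}\kappa^2,\\
	(Pe^{\frac{t}{2n}\cL})^{2n} &= e^{tP\cL P}P + \tfrac{1}{2n} E_1 + R_{2n}, & \|R_{2n}\|_{1\rightarrow1} &\leq \tfrac{1}{8n^2}\kappa^2,
\end{align*}
with $\kappa := 1 + 2t^2\|P\cL^2\|_{1\rightarrow1} + t\|\cL P\|_{1\rightarrow1}(4 + t\|\cL P\|_{1\rightarrow1})$. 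The decisive point is that $E_1$, the order-$1/n$ operator supplied by the proposition, is the \emph{same} $n$-independent operator in both lines (it depends only on $t$, $\cL$, and $P$); its precise form is immaterial here, only its independence of $n$ is used.

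I would then take the weighted combination $2\times(\text{second line}) - (\text{first line})$, with the coefficients $c_1 = -1,\ c_2 = 2$ fixed by the Vandermonde-type system $c_1 + c_2 = 1$ (zeroth order) and $c_1 + c_2/2 = 0$ (first order). The zeroth-order terms combine to $e^{tP\cL P}P$, the first-order terms cancel identically because $2 \cdot \tfrac{1}{2n} - \tfrac1n = 0$, and only the remainders survive:
\begin{equation*}
	A_n = 2R_{2n} - R_n, \qquad \|A_n\|_{1\rightarrow1} \leq \tfrac{1}{4n^2}\kappa^2 + \tfrac{1}{2n^2}\kappa^2 = \cO\Bigl(\tfrac{1}{n^2}\Bigr).
\end{equation*}
With the first paragraph this proves the operator-norm bound. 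For the observable form, write $X_n := 2(Pe^{\frac{t}{2n}\cL}P)^{2n} - (Pe^{\frac{t}{n}\cL}P)^n - e^{tP\cL P}P$; by linearity of the trace the quantity inside the absolute value equals $\tr[O\,X_n(\rho)]$, and the duality bound $|\tr[OB]| \leq \|O\|_{\cB(\cH)}\|B\|_1$ combined with $\|X_n(\rho)\|_1 \leq \|X_n\|_{1\rightarrow1}\|\rho\|_1$ gives the claimed $\cO(1/n^2)$.

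I do not anticipate a substantive obstacle, as all the analysis is already carried out in \Cref{prop:projective-zeno-second-order}. The two points that genuinely require care are (i) the idempotency identity $(Pe^{s\cL}P)^m = (Pe^{s\cL})^m P$, which reconciles the sandwiched products of the corollary with the one-sided products of the proposition, and (ii) confirming that $E_1$ is truly $n$-independent, since this is precisely what makes the weighted cancellation $2 \cdot \tfrac{1}{2n} - \tfrac1n = 0$ annihilate the leading error term rather than merely rescale it.
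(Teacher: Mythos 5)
Your proposal is correct and follows essentially the same route as the paper: applying \Cref{prop:projective-zeno-second-order} at step counts $n$ and $2n$, exploiting that $E_1$ is $n$-independent, and cancelling the first-order term with the coefficients $c_1=-1$, $c_2=2$ from the $2\times 2$ Vandermonde system. Your explicit reduction $(Pe^{s\cL}P)^m = (Pe^{s\cL})^m P$ bridging the corollary's sandwiched products to the proposition's one-sided products is a detail the paper leaves implicit, but it is not a different method.
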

			This means that running two Zeno effects in parallel with different step sizes can improve the convergence rate to $\frac{1}{n^2}$, while only linearly increasing the number of channel usages.
		
		\subsubsection{Generalized Multi-product Zeno effect}
			In the following, we generalize the above scheme, similar to the multi-product Trotter formulas discussed, for example, in \cite{Childs.2021, Carrera.2023}, using the higher-order expansion of the generalized quantum Zeno effect (see \Cref{thm:generalized-zeno}). That is, there exists a sequence of bounded operators $\{E_k\}_{k=1}^K$ such that
			\begin{equation*}
				\Bigl\|(Me^{\frac{t}{n}\cL})^n - \sum_{j=1}^{J}\lambda_j^{n}e^{tP_j\cL P_j}P_j - \sum_{k=1}^{K} \frac{1}{n^k} \sum_{j=1}^{J}\lambda_j^nE_{k,j}\Bigr\|_{1 \rightarrow 1} = \cO\left(\frac{1}{n^{K+1}}\right)\,.
			\end{equation*}
			For a quantum operator $M$ admitting the uniform power convergence assumption 
			\begin{equation*}
				\Bigl\|M^n - \sum_{j=1}^J \lambda_j^n P_j \Bigr\|_{\cX \rightarrow \cX} \leq c \delta^n
			\end{equation*}
			given a  $c \geq 0$, $n \in \N$, peripheral  eigenvalues $\{\lambda_j\}_{j=1}^J \subset \partial \D_1$, and a quantum dynamical $C_0$-semigroup with generator $(\cL,\cD(\cL))$, so that $M\cL^k$, $\cL M$ are bounded for all $k \in \{1, \dots, K+1\}$. Additionally, assume that the peripheral  eigenvalues are periodic, i.e., the phases are rational:
			\begin{equation*}
				\{\lambda_j\}_{j=1}^{J}\subset\{e^{2\pi i q}\,|\,q\in\Q\}\,.
			\end{equation*}
			Therefore, there exists a common period $p$ such that $\lambda_j^p=\lambda_j$ for all $j\in\{1,...,J\}$. Then, choosing $n_\ell=m_\ell p n$, we obtain
			\begin{equation}\label{eq:multi-product-error}
				\begin{aligned}
					&\left\|\sum_{\ell=1}^{K+1}c_{\ell}(Me^{\frac{t}{m_\ell pn}\cL})^{m_\ell pn}
					- \sum_{j=1}^{J}\lambda_j e^{tP_j\cL P_j}P_j\right.\\
					&\qquad\left.{}- \sum_{k=1}^{K}\frac{1}{(pn)^k}\Bigl(\sum_{\ell=1}^{K+1} \frac{c_\ell}{m_\ell^{k}}\Bigr)
					\sum_{j=1}^{J}\lambda_j E_{k,j}\right\|_{1 \rightarrow 1}
					= \cO\left(\frac{1}{n^{K+1}}\right)\,.
				\end{aligned}
			\end{equation}
			if 
			\begin{equation}\label{eq:multi-product-condition}
				\sum_{\ell=1}^{K+1}c_\ell = 1\quad\text{and}\quad\sum_{\ell=1}^{K+1}c_\ell m_\ell^{-k}=0\quad\text{for }k=1,...,K\,.
			\end{equation}
			For the choice $m_\ell=\ell$, the multi-product Zeno effect converges with a rate of order $\frac{1}{n^{K+1}}$, provided that the vector $c\in\R^{K+1}$ satisfies the following system of linear equations:
			\begin{equation}\label{LGS}
				\begin{pmatrix}
					1 & 1 & \hdots & 1\\
					\frac{1}{1} & \frac{1}{2} & \hdots & \frac{1}{K+1}\\
					\vdots & \vdots & \ddots & \vdots \\
					\frac{1}{1^{K+1}} & \frac{1}{2^{K+1}} & \hdots & \frac{1}{(K+1)^{K+1}}
				\end{pmatrix}
				\begin{pmatrix}
					c_1 \\
					c_2\\
					\vdots\\
					c_{K+1}
				\end{pmatrix}
				=
				\begin{pmatrix}
					1 \\
					0\\
					\vdots\\
					0
				\end{pmatrix}\,.
			\end{equation}
			Note that the system of linear equations in Eq.~\eqref{LGS} is a Vandermonde system and therefore invertible. However, this equidistant-grid choice is typically ill-conditioned for large $K$ (see \cite{Low.2019,Carrera.2023}). More importantly for the present setting, the coefficients grow exponentially: solving the Vandermonde system yields
			\begin{equation*}
				c_\ell = (-1)^{\ell-1}\binom{K+1}{\ell}\ell^K,
			\end{equation*}
			and thus $S_K\coloneqq\sum_{\ell=1}^{K+1}|c_\ell|$ satisfies
			\begin{equation}\label{eq:vandermonde}
				S_K \in \Omega(e^K),
			\end{equation}
			as discussed in \cite{Low.2019}. Since $S_K$ is part of the error bound, this limits the scalability of the multi-product Zeno effect for large $K$.
			\begin{cor}[Equidistant-grid case]\label{cor:vandermonde-case}
				Let $M$ be a quantum operation satisfying the uniform power convergence \eqref{eq:power-convergence} with periodic eigenvalues, i.e., $\{\lambda_j\}_{j=1}^{J}\subset\{e^{2\pi i q}\,|\,q\in\Q\}$, with period $p\in\N$, and let $(\cL,\cD(\cL))$ be the generator of a quantum dynamical $C_0$-semigroup satisfying $M\cL^k$ and $\cL M$ bounded for all $k\in\{1,...,K+1\}$. For $m_\ell=\ell\in\{1,...,K+1\}$, any $\varepsilon > 0$, all observables $O\in\cB(\cH)$, states $\rho\in\cT_1(\cH)$, and $\frac{t}{n}<\varepsilon$, we have
				\begin{equation*}
					\begin{aligned}
						\Bigl\|\sum_{\ell=1}^{K+1}c_\ell\bigl(Me^{\frac{t}{m_\ell n p}\cL}\bigr)^{m_\ell n p}-\sum_{j=1}^{J}\lambda_j e^{tP_j\cL P_j}P_j\Bigr\|_{1\rightarrow 1}&=\cO\Bigl(\frac{1}{n^{K+1}}\Bigr)\,,
					\end{aligned}
				\end{equation*}
				in particular,
				\begin{equation*}
					\begin{aligned}
						\tr\biggl[O\biggl(\sum_{\ell=1}^{K+1}c_\ell\bigl(Me^{\frac{t}{m_\ell n p}\cL}\bigr)^{m_\ell n p}-\sum_{j=1}^{J}\lambda_j e^{tP_j\cL P_j}P_j\biggr)(\rho)\biggr]&=\cO\Bigl(\frac{1}{n^{K+1}}\Bigr)\,.
					\end{aligned}
				\end{equation*}
				Here, the coefficients $c_\ell$ are given by $c_\ell = (-1)^{\ell-1}\binom{K+1}{\ell}\ell^K$ so that $\sum_{\ell=1}^{K+1}|c_\ell| \in \Omega(e^K)$ and $\sum_{\ell=1}^{K+1}m_\ell=\frac{(K+1)(K+2)}{2}$.
			\end{cor}

			\subsubsection{Well-conditioned multi-product grids}
			To overcome this limitation, we choose $m_\ell$ in $n_\ell \coloneqq m_\ell\, p\, n$ following the approach discussed in the Hamiltonian-simulation literature \cite{Childs.2021, Low.2019, Carrera.2023}. Recall that by \Cref{thm:generalized-zeno} each sequence satisfies the error bound in \Cref{eq:multi-product-error} under the condition \Cref{eq:multi-product-condition}, i.e.
			\begin{equation*}
				\sum_{\ell=1}^{K+1}c_\ell = 1\quad\text{and}\quad\sum_{\ell=1}^{K+1}c_\ell m_\ell^{-k}=0\quad\text{for }k=1,...,K\,.
			\end{equation*}
			is satisfied. This linear system has a $(K+1)\times(K+1)$ matrix determined by $\{m_\ell\}$ as observed in \Cref{LGS}. Next, we follow \cite{Low.2019} (see also \cite{Carrera.2023}) and choose $m_\ell$ as an approximate Chebyshev grid.

			\begin{thm}\label{thm:multi-product-zeno-effect-1}
				Let $M$ be a quantum operation satisfying the uniform power convergence \eqref{eq:power-convergence} with periodic eigenvalues, i.e., $\{\lambda_j\}_{j=1}^{J}\subset\{e^{2\pi i q}\,|\,q\in\Q\}$, with period $p\in\N$, and let $(\cL,\cD(\cL))$ be the generator of a quantum dynamical $C_0$-semigroup satisfying $M\cL^k$ and $\cL M$ bounded for all $k\in\{1,...,K+1\}$. Then for all $\varepsilon > 0$ and all observables $O\in\cB(\cH)$, states $\rho\in\cT_1(\cH)$, and $\frac{t}{n}<\varepsilon$, there exist pairwise distinct integers $m_1, \ldots, m_{K+1} \in \mathbb{N}$ and a coefficient vector $c \in \R^{K+1}$ solving the linear system \eqref{eq:multi-product-condition} such that the multi-product Zeno effect converges with rate $\frac{1}{n^{K+1}}$:
				\begin{equation*}
					\begin{aligned}
						\Bigl\|\sum_{\ell=1}^{K+1}c_\ell\bigl(Me^{\frac{t}{m_\ell n p}\cL}\bigr)^{m_\ell n p}-\sum_{j=1}^{J}\lambda_j e^{tP_j\cL P_j}P_j\Bigr\|_{1\rightarrow 1}&=\cO\Bigl(\frac{1}{n^{K+1}}\Bigr)\,,
					\end{aligned}
				\end{equation*}
				in particular,
				\begin{equation*}
					\begin{aligned}
						\tr\biggl[O\biggl(\sum_{\ell=1}^{K+1}c_\ell\bigl(Me^{\frac{t}{m_\ell n p}\cL}\bigr)^{m_\ell n p}-\sum_{j=1}^{J}\lambda_j e^{tP_j\cL P_j}P_j\biggr)(\rho)\biggr]&=\cO\Bigl(\frac{1}{n^{K+1}}\Bigr)\,.
					\end{aligned}
				\end{equation*}
			\end{thm}

			\begin{proof}
			First, recall that by \Cref{thm:generalized-zeno}, for each fixed $\ell\in\{1,\dots,K+1\}$ and $n_\ell=m_\ell pn$ (compare to \Cref{eq:multi-product-error}), there exist bounded operators $E_{k,j}$ (independent of $n$) and a remainder $R_{\ell,n}$ such that 
			\begin{equation*}
				\Bigl\|(Me^{\frac{t}{m_\ell pn}\cL})^{m_\ell pn} - \sum_{j=1}^{J}\lambda_je^{tP_j\cL P_j}P_j - \sum_{k=1}^{K}\frac{1}{(m_\ell pn)^k}\sum_{j=1}^{J}\lambda_jE_{k,j} - R_{\ell,n}\Bigr\|_{1 \rightarrow 1} = \cO\left(\frac{1}{n^{K+1}}\right)\,.
			\end{equation*}
			Then, summing over $\ell$ and using the condition \eqref{eq:multi-product-condition} yields
			\begin{equation*}
				\Bigl\|\sum_{\ell=1}^{K+1}c_{\ell}(Me^{\frac{t}{m_\ell pn}\cL})^{m_\ell pn} - \sum_{j=1}^{J}\lambda_j e^{tP_j\cL P_j}P_j\Bigr\|_{1 \rightarrow 1} = \cO\left(\frac{1}{n^{K+1}}\right)\,.
			\end{equation*}
			The argument above holds for every pairwise distinct integer grid $\{m_\ell\}$ for which \eqref{eq:multi-product-condition} is solved. It remains to explain, in detail, how to choose such a grid with good conditioning. For 
			\begin{equation*}
				x_\ell\coloneqq m_\ell^{-1}\in(0,1],\qquad \ell=1,\ldots,K+1\,,
			\end{equation*}
			\Cref{eq:multi-product-condition} is equivalent to the polynomial interpolation problem of finding $c_\ell$ such that
			\begin{equation*}
				\sum_{\ell=1}^{K+1} c_\ell x_\ell^{k}=\delta_{k0},\qquad k=0,1,\ldots,K.
			\end{equation*}
			In \Cref{eq:vandermonde}, we observed that the equidistant-grid choice $m_\ell=\ell$ yields exponentially growing $S_K$ (Vandermonde instability). The key idea of \cite[Thm.~1, Eqs.~(8-9)]{Low.2019} is to first relax $x_\ell$ to real values so that $\widetilde{m}_\ell=x_\ell^{-1}$ and start from a Chebyshev-type node family: The real nodes are
			\begin{equation*}
				x_j^{\star}=\sin^2\!\Bigl(\frac{\pi(2j-1)}{4(K+1)}\Bigr),\qquad j=1,\ldots,K+1,
			\end{equation*}
			These are the images of the Chebyshev-Gauss-Lobatto interpolation points; their use minimizes the Lebesgue constant and hence the polynomial interpolation error (see \cite[Chap.~5]{Trefethen.2019}). Solving the moment system on these nodes gives coefficients with logarithmic norm growth instead of exponential growth:
			\begin{equation*}
				\sum_{j=1}^{K+1}|\widetilde c_j|=\cO(\log(K+1))\qquad\text{and}\qquad\sum_{\ell=1}^{K+1}\widetilde{m}_\ell=\cO(K\log K)\,.
			\end{equation*}
			However, we require integer step counts, which is why we convert real exponents to integers by scale-and-round:
			\begin{equation*}
				m_\ell\coloneqq\lceil \kappa\widetilde m_\ell\rceil\in\N,
			\end{equation*}
			with $\kappa$ chosen so that the rounded values remain pairwise distinct (this is exactly the role of the scaling parameter in \cite[Eq.~(10)]{Low.2019}). The perturbation estimate of \cite[Eq.~(11)]{Low.2019} shows that rounding changes each coefficient by only a constant relative factor, provided the pre-rounded nodes are sufficiently separated. Therefore, one can choose pairwise distinct integers $m_1,\ldots,m_{K+1}$ such that \eqref{eq:multi-product-condition} is satisfied and
			\begin{equation*}
				S_K=\sum_{\ell=1}^{K+1}|c_\ell|=\cO(\log(K+1)).
			\end{equation*}
			This finishes the proof by choosing $n_\ell=m_\ell pn$.
		\end{proof}
			\begin{rmk}[Time dependency]
				From the proof of Theorem~\ref{thm:multi-product-zeno-effect-1}, one obtains the time dependence $\cO(t^{2(K+1)}/n^{K+1})$. Hence, to achieve a target accuracy $\eta>0$, it suffices to choose $n=\cO(t^2\eta^{-1/(K+1)})$. Thus, increasing $K$ improves only the accuracy dependence (from $\eta^{-1}$ to $\eta^{-1/(K+1)}$), while the $t$-dependence remains quadratic.
			\end{rmk}
			
			This shows that the convergence of the quantum Zeno effect can be improved to order $\frac{1}{n^{K+1}}$ by post-processing the probability distributions of $K+1$ parallel Zeno sequences with step counts $n_\ell=m_\ell pn$. Hence, the accuracy scaling improves exponentially in $K$, while the overhead remains moderate: the number of parallel sequences grows only linearly in $K+1$, and for well-conditioned grids the coefficient sum satisfies $S_K=\sum_{\ell=1}^{K+1}|c_\ell|=\cO(\log(K+1))$.

	\subsection{Applications}\label{subsec:applications}
        Finally, we present two brief examples demonstrating the scheme’s broad applicability, which is formulated with sufficient generality to encompass most instances of the Zeno effect.
        This means that every convergence rate of the Zeno effect in the above form \eqref{eq:generalized-Zeno-effect}, even in infinite-dimensional systems, can be improved using the presented multi-product scheme.
		
		\subsubsection{Dynamical decoupling}
		The so-called `Bang-Bang' method is closely related to the quantum Zeno effect (see \cite{Facchi.2004,Burgarth.2022,Burgath.2024}). In a bipartite system with one part being the bath, the idea is to apply frequent strong `kicks' at short intervals to reduce the influence of the bath on the system, thereby decoupling the system from the bath \cite{Viola.1998}. The quantum operation acting on the bipartite system is $\mathbf{1}_{\cH_1}\otimes M$, where $M$ is an ergodic quantum channel on system~$2$. By \cite[Thm.~16]{Burgath.2024}, the peripheral eigenvalues of $M$ are $\lambda_j = e^{2\pi i j/J}$ for $j = 0,\ldots,J-1$, with spectral gap $\delta$ equal to the spectral radius of $M$ on the complement of its peripheral eigenspace. A concrete example is the qubit amplitude-damping channel $M_\gamma(\rho)=K_0\rho K_0^\dagger+K_1\rho K_1^\dagger$ with $K_0=\ketbra{0}{0}+\sqrt{1-\gamma}\ketbra{1}{1}$ and $K_1=\sqrt{\gamma}\ketbra{0}{1}$ for $0<\gamma<1$: it is ergodic with unique fixed point $\ketbra{0}{0}$, has $J=1$ (only peripheral eigenvalue $\lambda_1=1$), and $\delta=\sqrt{1-\gamma}$.
		
		More formally, the quantum dynamical semigroup is defined on a bipartite system of two finite-dimensional Hilbert spaces $\cH_1 \otimes \cH_2$ and a Hamiltonian, which can always be represented by a set of self-adjoint operators $H_1,H_{1,j}\in\cB(\cH_1)$ and $H_2,H_{2,j}\in\cB(\cH_2)$ for all $j\in\{1,...,d\}$ so that
			\begin{equation*}
				H = H_1 \otimes \1 + \1 \otimes H_2 + \sum_{j=1}^d H_{1,j} \otimes H_{2,j}\,.
			\end{equation*}
			The goal is to interact with the dynamics in such a way that the sum over the interaction terms with the bath is transformed into a weighted sum over $H_{1,j}$ in the effective dynamics (see \cite[Def.~15]{Burgath.2024}). Here, we focus on the recent results of \cite{Burgath.2024}, which states in Theorem 16 that dynamical decoupling using a quantum channel $\1 \otimes M$ is possible if and only if $M$ is ergodic, i.e.~the channel admits a unique fixed point. By \cite[Thm.~16]{Burgath.2024}, this property is equivalent to the assumption that the eigenvalues given in \Cref{eq:power-convergence} are not degenerated and of the form $\lambda_j=e^{2\pi i j/J}$ for all $j\in\{1,...,J\}$. In particular the phases are rational so that \Cref{thm:generalized-zeno} improves the convergence rate of the decoupling scheme \cite[Thm.~16]{Burgath.2024}, i.e.~for a fixed $K\in\N$, fixed point $\rho^*=M(\rho^*)$, there is a $c_\ell\in\R^{K+1}$ so that with $H_{dec}=H_1+\sum_{j=1}^{d}\tr[H_{2,j}\rho^*]H_{1,j}$
			\begin{equation*}
				\Bigl\|\sum_{\ell=1}^{K+1}c_{\ell}\bigl((\1\otimes M)e^{-\frac{t}{\ell n J}i[H,\cdot]}\bigr)^{\ell n J} - e^{-ti[H_{dec},\cdot]}\otimes\sum_{j=1}^{J}\lambda_jP_j\Bigr\|_{1\rightarrow1}=\cO\Bigl(\frac{1}{n^{K+1}}\Bigr)\,.
			\end{equation*}
			
		\subsubsection{Bosonic cat code}
			Another example is the bosonic cat code \cite{Azouit.2016, Guillaud.2019, Guillaud.2023}. Let $\cH = L^2(\R)$ be the Fock space of a single-mode bosonic system endowed with an orthonormal (Fock-)basis $\{\ket{n}\}_{n\in\N}$. The annihilation and creation operators are then defined by $a\ket{n} = \sqrt{n}\ket{n-1}$ and $\ad\ket{n} = \sqrt{n+1}\ket{n+1}$, satisfying for all $n\in\N$ the so-called canonical commutation relation (CCR):
			\begin{equation*}
				[a, \ad]\ket{n} = \ket{n}\,,\quad [a,a]\ket{n} = 0\,.
			\end{equation*}
			The number operator is defined as $N\ket{n} = \ad a\ket{n} = n\ket{n}$, and coherent states are the eigenvectors of $a$, denoted by $\ket{\alpha}$, such that $a\ket{\alpha} = \alpha\ket{\alpha}$ for any $\alpha \in \C$. The code space of the bosonic cat code governed by the two-photon driven dissipative process for a given $\alpha \in \R$ is $\cC_2 \coloneqq \spa\left\{\ketbra{\alpha_1}{\alpha_2} : \alpha_1, \alpha_2 \in \{-\alpha, \alpha\}\right\}$. Note that larger code-spaces are desirable due to their error-correcting properties, but they are not relevant to demonstrate the multi-product Zeno effect. The Schrödinger cat states, which represent the logical qubits, are defined by
			\begin{equation*}
				\begin{aligned}
					\ket{CAT_{\alpha}^+} &\coloneqq \frac{\ket{\alpha} + \ket{-\alpha}}{\sqrt{2(1 + e^{-2|\alpha|^2})}}\,,\\
					\ket{CAT_{\alpha}^-} &\coloneqq \frac{\ket{\alpha} - \ket{-\alpha}}{\sqrt{2(1 - e^{-2|\alpha|^2})}}\,.
				\end{aligned}
			\end{equation*}
			Then, the $X(\theta)$ gate, described in \cite{Mirrahimi.2014} and experimentally realized in \cite{Touzard.2018}, is defined by
			\begin{equation*}
				X(\theta) = \cos\left(\frac{\theta}{2}\right)(P^+_{\alpha} + P^-_{\alpha}) + i\sin\left(\frac{\theta}{2}\right)X_{\alpha}\,,
			\end{equation*}
			with $P^{\pm}_{\alpha} = \ket{CAT^{\pm}_{\alpha}}\bra{CAT^{\pm}_{\alpha}}$ and $X_{\alpha} = \ket{CAT^+_{\alpha}}\bra{CAT^-_{\alpha}} + \ket{CAT^-_{\alpha}}\bra{CAT^+_{\alpha}}$. One possible construction is obtained by intersecting the driving Hamiltonian $H = a + \ad$ with the projections $P_{\alpha} = P^+_{\alpha} + P^-_{\alpha}$ (see \cite{Moebus.2024}), because the Zeno dynamics is
			\begin{equation*}
				e^{itP_{\alpha} HP_{\alpha}}P_{\alpha} = \cos(t(\alpha + \alpha^\dagger))P_{\alpha} + i\sin(t(\alpha + \alpha^\dagger))X_{\alpha}\,,
			\end{equation*}
			which constructs the rotation $X(\theta)$ for appropriate $t$. Since $P_{\alpha}H^k$ and $H^kP_{\alpha}$ are bounded for any $k \in \N$ (as $P_{\alpha}$ is a projection onto a finite-dimensional subspace), the quantum operation $\cP_\alpha = P_\alpha\,\cdot\,P_\alpha$ has a single peripheral eigenvalue $\lambda_1 = 1$ (i.e.\ $J = 1$) and $\delta = 1$, with all non-peripheral eigenvalues equal to zero. \Cref{thm:multi-product-zeno-effect-1} improves the convergence rate to any power $K \in \N$, i.e., for $\cP_{\alpha} = P_{\alpha} \cdot P_{\alpha}$, we have
			\begin{equation*}
				\Bigl\|\sum_{\ell=1}^{K+1}c_\ell \left(\cP_{\alpha} e^{-i\frac{t}{\ell n}[H, \cdot]}\right)^{\ell n} - e^{-ti\cP_{\alpha}[H, \cdot]\cP_{\alpha}}\cP_{\alpha} \Bigr\|_{1 \rightarrow 1} = \cO\Bigl(\frac{1}{n^{K+1}}\Bigr)\,.
			\end{equation*}
			\begin{rmk}[Multi-mode systems]
				The analysis applies only to single or few-mode systems. Extending to multi-mode dynamics encounters fundamental obstructions related to information propagation, resulting in bad dependence on the number of modes: Lieb--Robinson bounds (LRBs), which control information spread, exist only weakly for the Bose--Hubbard model \cite{Eisert.2009,Kuwahara.2021}. Establishing strong locality tools for dissipative bosonic systems (see \cite{Moebus.2023Learning, Rico.2025}) is an important open problem for scalable bosonic quantum error correction.
			\end{rmk}

\section{Conclusion}\label{sec:conclusion}
	This work proposes a multi-product scheme for the quantum Zeno effect by deriving a higher-order error expansion. The approach involves calculating higher-order error terms in the Zeno sequence using a modified Chernoff Lemma, a modified Dunford-Segal approximation, and the holomorphic functional calculus. A multi-product scheme is then applied to improve the convergence rate from order $\frac{1}{n}$ to $\frac{1}{n^{K+1}}$ for a given $K\in\N$. Due to the generality of the result, we are able to apply the result to the `Bang-Bang' method, which is used to decouple systems from their environment, and the bosonic cat code to engineer the $X(\theta)$ gate. It is important to highlight that the scheme is broadly applicable due to its general formulation. However, several research directions remain open. For instance, obtaining tighter bounds, especially for the generalized Zeno effect, would be of interest. Additionally, one could investigate generalizing the choice $n_\ell = \ell n$ to minimize error for a given gate complexity (or, more generally, resource constraints). In this case, an analysis of the Vandermonde matrix discussed in \Cref{LGS}, similar to that in \cite{Low.2019, Carrera.2023}, would be necessary to ensure a numerically well-conditioned problem. Another path could involve extending the approach to time-dependent systems \cite{Watkins.2024,Mizuta.2024,Cao.2025} or relaxing the assumptions on the unbounded generator.

\vspace{2ex}
\emph{Acknowledgments:} 
I would like to thank Paul Gondolf, Robert Salzmann, and the anonymous Reviewer for their valuable discussions on the topic, and Cambyse Rouzé for his advice and feedback. This project was funded within the QuantERA II Programme that has received funding from the EU’s H2020 research and innovation programme under the GA No 101017733. Moreover, I acknowledges the support of the Deutsche Forschungsgemeinschaft (DFG, German Research Foundation) – Project-ID 470903074 – TRR 352 and the support from the Open Access Publishing Fund of the University of T\"{u}bingen.

%References
\sloppy
\setlength{\bibitemsep}{0.05ex}
\printbibliography[heading=bibnumbered]

\newpage

\appendix
\addtocontents{toc}{\protect\setcounter{tocdepth}{1}}
\addcontentsline{toc}{section}{Appendices} 
\addtocontents{toc}{\protect\setcounter{tocdepth}{0}}
\section{Proofs of \Cref{subsec:projective-zeno} (\nameref*{subsec:projective-zeno})}\label{appx-sec:projective-zeno}
	In this section, we state all proofs required in \Cref{subsec:projective-zeno} in detail.
	\subsection{Proofs of auxiliary results}\label{appx-sec:projective-zeno-auxiliary}
		\begin{proof}[Proof of \Cref{lem:chernoff}]
			We follow the proof in \cite[Lem.~4.2]{Moebus.2022}. First, note that $(e^{t(C - \1)})_{t \geq 0}$ defines a contraction semigroup, and $C_\tau \coloneqq (1-\tau)\1 + \tau C = \1 + \tau(C - \1)$ is a contraction for all $\tau \in [0,1]$. By the fundamental theorem of calculus, we have:
			\begin{align*}
				\left(C^n - e^{n(C - \1)}\right) &= n \int_{0}^{1} C_\tau^{n-1} \bigl((C - \1) - C_\tau(C - \1)\bigr) e^{(1-\tau) n(C - \1)} \, d\tau\\
				&= -n(C - \1)^2 \int_{0}^{1} \tau C_\tau^{n-1} e^{(1-\tau) n(C - \1)} \, d\tau\,.
			\end{align*}
			Under the assumption that $C: [0,1] \rightarrow \cB(\cX),\, s \mapsto C(s)$ is a continuously differentiable operator-valued map with $C(0) = \1$, the fundamental theorem of calculus gives:
			\begin{align*}
				\Bigl(C^n(s) - e^{n(C(s) - \1)}\Bigr) &= -ns^2 \iiint_{0}^{1} \tau_1 e^{(1-\tau_1) n(C(s) - \1)} C'(s\tau_2) C'(s\tau_3) C_{\tau_1}^{n-1}(s) \, d\tau_{321}\,,
			\end{align*}
			which completes the proof.
		\end{proof}
		
		\begin{proof}[Proof of \Cref{lem:dunford-segal}]
			To prove this statement, we use the same strategy as in \cite{Moebus.2019, Moebus.2022} and apply the fundamental theorem of calculus three times:
			\begingroup\small
			\begin{equation*}
				\begin{aligned}
					\Bigl(e^{\frac{t}{s_1}(C(s_2) - \1)} - e^{tC'}\Bigr) &= t \int_0^1 e^{(1-\tau_1) \frac{t}{s_1}(C(s_2) - \1)} \biggl(\frac{1}{s_1}(C(s_2) - \1) - C'\biggr) e^{\tau_1tC'} \, d\tau_1 \\
					&= t \iint_0^1 e^{(1 - \tau_1) \frac{t}{s_1}(C(s_2) - \1)} \biggl(\frac{s_2}{s_1} \Bigl(C'(\tau_2 s_2) - C'\Bigr) + \Bigl(\frac{s_2}{s_1} - 1\Bigr) C' \biggr) e^{\tau_1tC'} \, d\tau_{21}\\
					&= t \iiint_0^1 e^{(1 - \tau_1) \frac{t}{s_1}(C(s_2) - \1)} \biggl(\frac{s_2^2}{s_1} \tau_2 C''(\tau_3\tau_2 s_2) + \frac{s_2-s_1}{s_1} C' \biggr) e^{\tau_1tC'} \, d\tau_{321}\\
					&= t \frac{s_2^2}{s_1} \iiint_0^1 e^{(1 - \tau_1)\frac{t}{s_1}(C(s_2) - \1)} \biggl(\tau_2 C''(\tau_3\tau_2 s_2) + \frac{s_2-s_1}{s_2^2} C' \biggr) e^{\tau_1tC'} \, d\tau_{321}.
				\end{aligned}
			\end{equation*}
			\endgroup
			Next, it is clear that $(e^{tC'})_{t\geq0}$ is a semigroup, and applying the above identity shows:
			\begin{equation*}
				\begin{aligned}
					\|e^{tC'}\|_{\cX \rightarrow \cX} &\leq \frac{ts}{2} \max_{\tau \in [0,1]} \|C''(\tau)\|_{\cX \rightarrow \cX} e^{t \|C'\|_{\cX \rightarrow \cX}} + \|e^{\frac{t}{s}(C(s) - \1)}\|_{\cX \rightarrow \cX}\\
					&\leq s t^2 \max_{\tau \in [0,1]} \|C''(\tau)\|_{\cX \rightarrow \cX} e^{t \|C'\|_{\cX \rightarrow \cX}} + 1
				\end{aligned}
			\end{equation*}
			for all $s \in (0,1]$, so that $\|e^{tC'}\|_{\cX \rightarrow \cX} \leq 1$, which completes the proof.
		\end{proof}

	\subsection{Proof of the projective Zeno higher order expansion}\label{appx-sec:projective-zeno-main}
		\begin{proof}[Proof of \Cref{prop:projective-zeno-second-order}]
			First, we absorb the time $t \geq 0$ into $\cL$ and begin with the well-known first-order approximation (see \cite{Moebus.2019, Moebus.2022}). To this end, we define $C(s) \coloneqq P e^{s \cL} P$ on $\cB(P \cX)$ with continuous derivatives $C'(s) = P \cL e^{s \cL} P = P e^{s \cL} \cL P$, $C''(s) = P \cL e^{s \cL} \cL P$ in the uniform operator topology, and $C'''(s) = P \cL^2 e^{s \cL} \cL P$ continuous in the strong operator topology. Since $P \cL P$ is bounded, $(e^{t C'})_{t \geq 0}$ is a well-defined contraction semigroup (see \Cref{lem:dunford-segal}). Then, for $n, m \in \N$,
			\begin{equation}\label{eq:first-order-zeno}
				\begin{aligned}
					C^m(&\tfrac{1}{n}) - e^{C'} \\
					&= \underbrace{C^m(\tfrac{1}{n}) - e^{m(C(\frac{1}{n})-\1)}}_{\Cref{lem:chernoff}} + \underbrace{e^{m(C(\frac{1}{n})-\1)} - e^{C'}}_{\Cref{lem:dunford-segal}} \\
					&= -\frac{m}{n^2} \iiint_0^1 \tau_1 e^{(1-\tau_1) m(C(\frac{1}{n}) - \1)} C'(\tfrac{\tau_2}{n}) C'(\tfrac{\tau_3}{n}) C_{\tau_1}^{m-1}(\tfrac{1}{n}) \, d\tau_{321} \\
					&\qquad + \frac{m}{n^2} \iiint_0^1 e^{(1-\tau_1) m(C(\frac{1}{n}) - \1)} \biggl( \tau_2 C''(\tfrac{\tau_3 \tau_2}{n}) - (n-m) \frac{n}{m} C' \biggr) e^{\tau_1 C'} \, d\tau_{321}.
				\end{aligned}
			\end{equation}
			Note that the identity in $\cB(P \cX)$ is $P$, but it is denoted as usual by $\1$, and for $m=n$, this implies the usual Zeno bound. In the next step, we consider the second-order expansion of the Zeno product for $x \in P \cX$, which is required due to the continuity of $C''(s)$ in the strong operator topology:
			\begingroup\small
			\begin{align}
				C&^n(\tfrac{1}{n})x - e^{C'}x \nonumber \\
				&= \frac{1}{n} \iiint_0^1 e^{(1-\tau_1) n(C(\frac{1}{n}) - \1)} \Bigl( \tau_2 C''(\tfrac{\tau_3 \tau_2}{n}) e^{\tau_1 C'} - \tau_1 C'(\tfrac{\tau_2}{n}) C'(\tfrac{\tau_3}{n}) C_{\tau_1}^{n-1}(\tfrac{1}{n}) \Bigr) x \, d\tau_{321} \nonumber \\
				\begin{split}
					&= \frac{1}{n^2} \iiint_0^1 \biggl[ (1-\tau_1) \iiint_0^1 \xi_2 e^{(1-\xi_1)(1-\tau_1) n(C(\frac{1}{n}) - \1)} C''(\tfrac{\xi_3 \xi_2}{n}) e^{\xi_1(1-\tau_1) C'} d\xi_{321} \biggr] \\
					&\qquad\qquad\qquad\qquad \cdot \Bigl( \tau_2 C''(\tfrac{\tau_3 \tau_2}{n}) e^{\tau_1 C'} - \tau_1 C'(\tfrac{\tau_2}{n}) C'(\tfrac{\tau_3}{n}) C_{\tau_1}^{n-1}(\tfrac{1}{n}) \Bigr) x \, d\tau_{321}
				\end{split} \label{eq:zeno-proj-expansion-1} \\
				\begin{split}
					&\,+\frac{1}{n^2} \iiint_0^1 e^{(1-\tau_1) C'} \Bigl( \tau_2 \biggl[ \tau_3 \tau_2 \int_0^1 C'''(\tfrac{\xi_1 \tau_3 \tau_2}{n}) d\xi_1 \biggr] e^{\tau_1 C'} \\
					&\qquad\qquad\qquad\qquad - \tau_1 \biggl[ \int_0^1 \tau_2 C''(\xi_1 \tfrac{\tau_2}{n}) C'(\xi_2 \tfrac{\tau_3}{n}) + \tau_3 C'(\xi_1 \tfrac{\tau_2}{n}) C''(\xi_1 \tfrac{\tau_3}{n}) d\xi_1 \biggr] C_{\tau_1}^{n-1}(\tfrac{1}{n}) \Bigr) x \, d\tau_{321}
				\end{split} \label{eq:zeno-proj-expansion-2} \\
				\begin{split}
					&\,-\frac{n-1}{n^3} \int_0^1 e^{(1-\tau_1) C'} \tau_1 (C')^2 \biggl[ \iiint_0^1 e^{(1-\xi_1) (n-1)(C_{\tau_1}(\frac{1}{n}) - \1)} \tau_1 \Bigl( \bigl( \xi_2 C''(\tfrac{\xi_3 \xi_2}{n}) - \frac{n}{n-1} C' \bigr) e^{\xi_1 \tau_1 C'} \\
					&\qquad\qquad\qquad\qquad\qquad\qquad\qquad\qquad\qquad - \xi_1 \tau_1 C'(\tfrac{\xi_2}{n}) C'(\tfrac{\xi_3}{n}) C_{\xi_1 \tau_1}^{n-2}(\tfrac{1}{n}) \Bigr) \, d\xi_{321} \biggr] x \, d\tau_1
				\end{split} \label{eq:zeno-proj-expansion-3} \\
				&\,+\frac{1}{2n} \int_0^1 e^{(1-\tau_1) C'} \Bigl( C'' - (C')^2 \Bigr) e^{\tau_1 C'} x \, d\tau_1\,,\nonumber
			\end{align}
			\endgroup
			where we have iteratively applied previously proven bounds, that is, we approximate the following:
			\begin{itemize}
				\item[Eq.~(\texorpdfstring{\ref{eq:zeno-proj-expansion-1}}{???})] $e^{(1-\tau_1) n(C(\frac{1}{n}) - \1)}$ by $e^{(1-\tau_1) C'}$ using \Cref{lem:dunford-segal} with $t=(1-\tau_1)$ and $s_1=s_2=\frac{1}{n}$;
				\item[Eq.~(\texorpdfstring{\ref{eq:zeno-proj-expansion-2}}{???})] $C''(\tfrac{\tau_3 \tau_2}{n})$ by $C''$, and $C'(\tfrac{\tau_2}{n})$ and $C'(\tfrac{\tau_3}{n})$ by $C'$ using the fundamental theorem of calculus and \cite[Lem.~B.15]{Engel.2000}, ensuring the continuity of the integrand;
				\item[Eq.~(\texorpdfstring{\ref{eq:zeno-proj-expansion-3}}{???})] $C_{\tau_1}^{n-1}(\tfrac{1}{n})$ by $e^{\tau_1C'}$ using \Cref{eq:first-order-zeno} with $m=n-1$ and $C_{\tau_1}(s)=(1-\tau_1)\1+\tau_1 C(s)$ satisfying $C'_{\tau_1}=\tau_1C'$, $C''_{\tau_1}=\tau_1C''$, and
				\begin{equation}\label{eq:concatenate-convex-comb}
					(C_{\tau_1})_{\xi_1}=(1-\xi_1)\1+\xi_1((1-\tau_1)+\tau_1 C)=(1-\xi_1\tau_1)\1+\xi_1\tau_1C\,.
				\end{equation}
			\end{itemize}
			Since $C(s)\coloneqq Pe^{s\cL}P$ has bounded derivatives, we have:
			\begin{equation*}
				\begin{aligned}
					\|C(s)\|_{\cX\rightarrow\cX} &\leq 1\,, \qquad & \|C'(s)\|_{\cX\rightarrow\cX} &\leq \|\cL P\|_{\cX\rightarrow\cX}\,, \\
					\|C''(s)\|_{\cX\rightarrow\cX} &\leq \|P\cL^2\|_{\cX\rightarrow\cX}\,, \qquad & \|C'''(s)\|_{\cX\rightarrow\cX} &\leq \|P\cL^2\|_{\cX\rightarrow\cX} \|\cL P\|_{\cX\rightarrow\cX}
				\end{aligned}
			\end{equation*}
			so that
			\begin{equation*}
				\begin{aligned}
					\|&C^n(\tfrac{1}{n}) - e^{C'} - \frac{1}{2n}\int_0^1 e^{\tau_1C'}(C'' - (C')^2)e^{(1-\tau_1)C'} d\tau_1 \|_{\cX\rightarrow\cX} \\
					&\leq \frac{1}{n^2} \biggl( \frac{1}{8} \|P\cL^2\|^2_{\cX\rightarrow\cX} + \frac{1}{12} \|P\cL^2\|_{\cX\rightarrow\cX} \|\cL P\|^2_{\cX\rightarrow\cX} \\
					&\qquad\qquad + \frac{1}{6} \|P\cL^2\|_{\cX\rightarrow\cX} \|\cL P\|_{\cX\rightarrow\cX} + \frac{1}{2} \|P\cL^2\|_{\cX\rightarrow\cX} \|\cL P\|_{\cX\rightarrow\cX} \\
					&\qquad\qquad + \frac{n-1}{n} \|\cL P\|^2_{\cX\rightarrow\cX} \Bigl( \frac{1}{4} \|P\cL^2\|_{\cX\rightarrow\cX} + \frac{n}{2(n-1)} \|\cL P\|_{\cX\rightarrow \cX} + \frac{1}{6} \|\cL P\|_{\cX\rightarrow\cX}^2 \Bigr) \biggr) \\
					&= \frac{1}{24n^2} \bigl( 3 \|P\cL^2\|^2_{\cX\rightarrow\cX} + 8 \|P\cL^2\|_{\cX\rightarrow\cX} \|\cL P\|_{\cX\rightarrow\cX} (2 + \|\cL P\|_{\cX\rightarrow\cX}) \\
					&\qquad\qquad + 4 \|\cL P\|^3_{\cX\rightarrow\cX} (3 + \|\cL P\|_{\cX\rightarrow\cX}) \bigr) \\
					&\leq \frac{1}{6n^2} \Bigl( \|P\cL^2\|_{\cX\rightarrow\cX} + \|\cL P\|_{\cX\rightarrow\cX} (4 + \|\cL P\|_{\cX\rightarrow\cX}) \Bigr)^2\,,
				\end{aligned}
			\end{equation*}
			which concludes the first step of the proof. Next, the fundamental theorem shows
			\begin{equation}\label{eq:expansion}
				\begin{aligned}
					(P&e^{\frac{1}{n}\cL})^n - e^{P\cL P}P\\ 
					&= (Pe^{\frac{1}{n}\cL}P)^{n-1}Pe^{\frac{1}{n}\cL}(\1-P) + (Pe^{\frac{1}{n}\cL}P)^n - e^{P\cL P}P \\
					&= \frac{1}{n^2}\iint_{0}^{1} \tau_1 C^{n-1}(\tfrac{1}{n})P\cL^2 e^{\frac{\tau_1\tau_2}{n}\cL}(\1-P) d\tau_{21} + C^{n-1}(\tfrac{1}{n})P\cL(\1-P) + C^n(\tfrac{1}{n}) - e^{C'}P\,.
				\end{aligned}
			\end{equation}
			Finally, we apply the result from \Cref{eq:first-order-zeno} and Equations (\ref{eq:zeno-proj-expansion-1}--\ref{eq:zeno-proj-expansion-3}) to the last three terms in the equation above to show
			\begin{equation*}
				\begin{aligned}
					\|&(Pe^{\frac{1}{n}\cL})^n - e^{P\cL P}P - \frac{1}{2n} \int_0^1 e^{\tau_1 P\cL P} P\cL[\cL, P] P e^{(1-\tau_1) P\cL P} d\tau_1 - \frac{1}{n} e^{P\cL P} P\cL (\1-P)\|_{\cX\rightarrow\cX} \\
					&\leq \frac{1}{6n^2} \Bigl( \|P\cL^2\|_{\cX\rightarrow\cX} + \|\cL P\|_{\cX\rightarrow\cX} (4 + \|\cL P\|_{\cX\rightarrow\cX}) \Bigr)^2 \\
					&\qquad + \frac{1}{2n^2} \Bigl( 3 \|P\cL^2\|_{\cX\rightarrow\cX} + \|\cL P\|_{\cX\rightarrow\cX} (2 + \|\cL P\|_{\cX\rightarrow\cX}) \Bigr)\\
					&\leq\frac{1}{2n^2} \Bigl(1+2 \|P\cL^2\|_{\cX\rightarrow\cX} + \|\cL P\|_{\cX\rightarrow\cX} (2 + \|\cL P\|_{\cX\rightarrow\cX}) \Bigr)^2\,,
				\end{aligned}
			\end{equation*}
			which concludes the proof by substituting again $\cL$ with $t\cL$ in the statement.
		\end{proof}

		\begin{proof}[Proof of \Cref{thm:projective-zeno-higher-order}]
			We first consider the symmetric case and prove 
			\begin{equation*}
				\Bigl\|(Pe^{\frac{t}{n}\cL}P)^n - e^{tP\cL P}P - \sum_{k=1}^{K} \frac{1}{n^k} \widetilde{E}_k\|_{1 \rightarrow 1} = \cO\Bigl(\frac{1}{n^{K+1}}\Bigr)\, 
			\end{equation*}
			for a sequence $(\widetilde{E}_k)_{k=1}^K\subset\cB(\cX)$. In particular, we prove by induction that for every $K$, there exists a sequence of bounded operators $(\widetilde{E}_k)_{k=1}^K$ such that the following equality holds:
			\begingroup\small
			\begin{equation}\label{eq:proof-expansion-K}
				(Pe^{\frac{t}{n}\cL}P)^n - e^{tP\cL P}P - \sum_{k=1}^{K} \frac{1}{n^k} \widetilde{E}_k = \frac{1}{n^{K+1}} \int_{[0,1]^{3^K}} f_{A_1} \bigl( X_n^{A_2}(\tau^{A_3}), Y_n^{A_2}(\tau^{A_3},\tau^{A_4}), Z_n^{A_2}(\tau^{A_3}) \bigr) d\tau\,.
			\end{equation}
			\endgroup
			Here, $f$ represents exactly what we saw in Equations (\ref{eq:zeno-proj-expansion-1}-\ref{eq:zeno-proj-expansion-3}), namely a sum of products of five operators that appear repeatedly. These are defined by
			\begin{equation}\label{eq:expansion-sym-projective-zeno}
				\begin{aligned}
					X_n^k(s) &\coloneqq C^{(k+1)}\Bigl(\frac{s}{n}\Bigr)\qquad &&\text{and}& \qquad X_{\infty}^k &= C^{(k+1)}(0)\,,\\
					Y_n^k(s_1,s_2) &\coloneqq e^{s_1(n-k)(C_{s_2}(\frac{1}{n})-\1)}\qquad &&\text{and}& \qquad Y_{\infty}(s) &= e^{sC'}\,,\\
					Z_n^k(s) &\coloneqq \bigl(C_{s}(\tfrac{1}{n})\bigr)^{n-k}\,,
				\end{aligned}
			\end{equation}
			for $s,s_1,s_2\in[0,1]$ and $k\in\{0,\dots,K\}$. As can be observed in the proof of \Cref{prop:projective-zeno-second-order}, the input $s$ is a product of different integration variables $\tau_1,\tau_2,\dots$. In addition to these parameters, the order of the operators is important since they do not commute. To account for this, we use the following triple of matrices $(A_1, A_2, A_3,A_4)$ with 
			\begin{itemize}
				\item $A_1 \in \{X, X_{\infty}, Y, Y_{\infty}, Z\}^{3^K \times 4K}$, which determines the order of the products by its rows; for example, the matrix row $(X, Z, Y)$ refers to the operator product $XYZ$, with the operators specified as follows.
				\item $A_2 \in \{0, \dots, K\}^{3^K \times 4K}$, which refers to the different $k$-inputs for the operator-valued functions $X_n^k(s)$, $Y_n^k(s)$, and $Z_n^k(s)$. Here, we assume that the sum of each row, if the entry refers to $X$, $X_\infty$, $Y$ or $Y_\infty$, is less than or equal to $2K+1$.
				\item $A_3,A_4 \in \{0, 1\}^{3^K \times 4K \times 3^K}$, which refers to the different $s$-inputs. As mentioned, these are products of the integration variables, so that $\tau^{A_3}(i,j) = \prod_{k=1}^{3^K} \tau_k^{A_2(i,j,k)}$. These products, organized in the matrix $3^K \times 4K$, are then used as inputs again ($A_4$ resp.). 
			\end{itemize}
			In summary, the function $f$ is a sum over $3^K$ products of $4K$ different operators and coefficients given by monomials in $\tau_1,\dots,\tau_{3^K},\tfrac{1}{n}$, with coefficients bounded in absolute value by $K-1$. The matrices $X_n^{A_2}(s^{A_3})$, $Y_n^{A_2}(s^{A_3},s^{A_4})$, and $Z_n^{A_2}(s^{A_3})$ are elements of $\cB(\cX)^{4K \times 3^K}$ with operator-valued entries. These determine the products in the function $f$ via the matrix $A_1$, which includes the information on whether $X, X_{\infty}, Y, Y_{\infty}, Z$ is used at a position $i$ in the product in the $j^{\text{th}}$ term, and with which parameters $A_2(i,j)$, $A_3(i,j)$, and $A_4(i,j)$ as input.
			
			It is clear that for $K=1$, the above construction corresponds to the expression considered in Equations (\ref{eq:zeno-proj-expansion-1}-\ref{eq:zeno-proj-expansion-3}) in the proof of \Cref{prop:projective-zeno-second-order}, verifying the `induction start'.
			
			Next, we assume that the expansion in \Cref{eq:proof-expansion-K} is given for some $K \in \N$. We then repeat the following expansions of $X_n^k(s)$, $Y_n^k(s_1,s_2)$, and $Z_n^k(s)$:
			\begingroup\footnotesize
			\begin{equation*}
				\begin{aligned}
					X_n^k(s)&=C^{(k)}(0)+\frac{s}{n}\int_{0}^{1}C^{(k+2)}\left(\frac{\tau_1 s}{n}\right)d\tau_{1}\\
					&=X_{\infty}^{k}+\frac{s}{n}\int_{0}^{1}X_n^{k+1}(\tau_1 s)d\tau_{1}\\
					Y_n^k(s_1,s_2)&=e^{s_1s_2C'} + s_1s_2 \frac{n-k}{n^2} \iiint_0^1 e^{(1 - \tau_1)s_1(n-k)(C_{s_2}(\frac{1}{n}) - \1)} \biggl(\tau_2 C''(\tfrac{\tau_3\tau_2}{n})\\
					&\qquad\qquad\qquad\qquad\qquad\qquad\qquad\qquad\qquad\qquad\qquad\qquad -k\frac{n}{n-k} C' \biggr) e^{\tau_1s_1s_2C'} \, d\tau_{321}\\
					&=Y_{\infty}(s_1s_2) + s_1s_2 \frac{n-k}{n^2} \iiint_0^1 Y_n^k((1 - \tau_1) s_1,s_2)\biggl(\tau_2 X_{n}^{1}(\tau_3\tau_2)\\
					&\qquad\qquad\qquad\qquad\qquad\qquad\qquad\qquad\qquad\qquad\qquad\qquad -k\frac{n}{n-k} X^{0}_\infty \biggr) Y_\infty(\tau_1s_1s_2)\, d\tau_{321}\\
					Z_n^k(s)&= e^{sC'} -s^2\frac{n-k}{n^2} \iiint_0^1 \tau_1 e^{(1-\tau_1) (n-k)(C_s(\frac{1}{n}) - \1)} C'(\tfrac{\tau_2}{n}) C'(\tfrac{\tau_3}{n}) C_{\tau_1s}^{n-k-1}(\tfrac{1}{n}) \, d\tau_{321} \\
					&\qquad\qquad + s\frac{n-k}{n^2} \iiint_0^1 e^{(1-\tau_1) (n-k)(C_s(\frac{1}{n}) - \1)} \biggl( \tau_2 C''(\tfrac{\tau_3 \tau_2}{n}) - k\frac{n}{n-k} C' \biggr) e^{\tau_1s C'} \, d\tau_{321}\\
					&= Y_{\infty}(s) -s^2\frac{n-k}{n^2} \iiint_0^1 \tau_1 Y_n^k((1-\tau_1),s)X_n^{0}(\tau_2)X_n^{0}(\tau_3)Z^{k+1}_n(\tau_1s)\, d\tau_{321} \\
					&\qquad\qquad + s\frac{n-k}{n^2} \iiint_0^1 Y_n^k((1-\tau_1),s) \biggl( \tau_2 X^1_n(\tau_3 \tau_2) - k\frac{n}{n-k} X^1_{\infty} \biggr) Y_{\infty}(\tau_1s) \, d\tau_{321}
				\end{aligned}
			\end{equation*}
			\endgroup
			using the fundamental theorem of calculus, \Cref{lem:dunford-segal}, and \Cref{eq:first-order-zeno} with \Cref{eq:concatenate-convex-comb}. Since the integration variables are extended by three dimensions, the number of terms increases by a factor at most three, the products increase by four, and the $k$'s increase by at most two. We achieve a $1/n$ error increase by one power if and only if there is an $n$-dependent operator left in the product; if not, the term is added to the definition of $\widetilde{E}_{K+1}$.
			
			Moreover, note that the following bounds hold:
			\begin{equation*}
				\begin{aligned}
					\|X_n^k(s)\|_{\cX \rightarrow \cX}\,,\,\, \|X_{\infty}^k\|_{\cX \rightarrow \cX}&\leq \|P\cL^k\|\|\cL P\|_{\cX \rightarrow \cX}\\
					\|Y_n^k(s)\|_{\cX \rightarrow \cX}\,,\,\, \|Y_{\infty}(s)\|_{\cX \rightarrow \cX}\,,\,\, \|Z_n^k(s)\|_{\cX \rightarrow \cX}&\leq1
				\end{aligned}
			\end{equation*}
			and are independent of $n$, which concludes the expansion of the symmetric, projective Zeno effect (\ref{eq:expansion-sym-projective-zeno}). Next, we consider \Cref{eq:expansion}
			\begin{equation*}
				\begin{aligned}
					(P&e^{\frac{1}{n}\cL})^n - e^{tP\cL P}P \\ 
					&= C^{n-1}\left(\frac{1}{n}\right)Pe^{\frac{1}{n}\cL}(\1-P) + \left(C^n\left(\frac{1}{n}\right) - e^{C'}\right)P\,.
				\end{aligned}
			\end{equation*}
			Since $C^{n-1}\left(\frac{1}{n}\right)$ and $\left(C^n\left(\frac{1}{n}\right) - e^{C'}\right)P$ can be expanded analogously to the above proof and
			\begin{equation*}
				\begin{aligned}
					Pe^{\frac{1}{n}\cL}(\1-P)
					&= \sum_{k=0}^{K}\frac{1}{n^k}P\cL^k(\1-P)\\
					&\quad + \frac{1}{n^{K+1}}\int_{[0,1]^{K+1}}\tau_1^{K}\tau_2^{K-1}\cdots\tau_{K+1}
					P\cL^{K+1}e^{\frac{\tau_1\cdots\tau_{K+1}}{n}\cL}(\1-P)d\tau\,.
				\end{aligned}
			\end{equation*}
			by the fundamental theorem of calculus, as long as $P\cL^{K+1}$ is bounded by assumption, the proof is complete.
		\end{proof}

\section{Proofs of \Cref{subsec:generalized-zeno} (\nameref*{subsec:generalized-zeno})}\label{appx-sec:generalized-zeno}
	In this section, we state all proofs required in \Cref{subsec:generalized-zeno} in detail.
	
	\subsection{Proofs of auxiliary result}\label{appx-sec:generalized-zeno-auxiliary}
		\begin{proof}[Proof of \Cref{lem:functional-calculus}]
			In the first part of the proof, we follow the strategy of \cite[Thm.~III]{Moebus.2022}, adapted to the special case where, for all $x \in \cX$,
			\begin{equation*}
				M(t)x \coloneqq Me^{t\cL}x = Mx + t\int_{0}^1M\cL e^{t\tau\cL}x\,d\tau \eqqcolon Mx + tB(t)x\,.
			\end{equation*}
			Since $M\cL$ is bounded by assumption, we have $\|M(t) - M(0)\|_{\cX\rightarrow\cX} \leq t\|M\cL\|_{\cX\rightarrow\cX}$, so \cite[Lem.~A.2]{Moebus.2022} shows the semicontinuity of the spectrum under this perturbation. That is, there exists an $\epsilon > 0$ such that $\spec(Me^{t\cL})$ remains separated by $\Gamma$ for all $t \in [0,\epsilon]$. By applying the holomorphic functional calculus \cite[Thm.~2.3.1-3]{Simon.2015}, we obtain that
			\begin{equation*}
				P(t) = \frac{1}{2\pi i} \oint_{\Gamma}R(z,Me^{t\cL})\,dz
			\end{equation*}
			defines, for all $t \in [0,\epsilon]$, a projection onto the eigenspace enclosed by $\Gamma$. Next, for simplicity, we introduce the notation $R(t) = R(z,M + tB(t))$ and $R = R(0)$. Applying the second resolvent identity, $tR(t)B(t)R = R - R(t)$, and using the fundamental theorem of calculus, we show that
			\begin{equation*}
				\begin{aligned}
					2\pi i P(t) &= \oint_{\Gamma}R(t)\,dz \\
					&= \oint_{\Gamma}\Bigl(R - tR(t)B(t)R\Bigr)\,dz \\
					&= \int_0^1 \oint_{\Gamma}\Bigl(R - tRBR + t^2R(t)(BR)^2 - t^2 RB'(\tau_1 t)R\Bigr)\,dz\,d\tau_1 \\
					&= \oint_{\Gamma}\Bigl(R - tRBR + t^2\bigl(R(BR)^2 - RB'R\bigr)\Bigr)\,dz \\
					&\quad - t^3 \iint_0^1 \oint_{\Gamma}R(t)(BR)^3\,dz\,d\tau_{12}\\
					&\quad + t^3 \iint_0^1 \oint_{\Gamma}
					R\bigl(B'(\tau_2 t)RB(t) + BRB'(\tau_2 t) - \tau_1 B''(\tau_1 \tau_2 t)\bigr)R\,dz\,d\tau_{12}\,.
				\end{aligned}
			\end{equation*}
			Since $B^{(k)}(t) = \int_0^1 M\cL^{k+1} e^{\tau t \cL}\,d\tau$ and $M\cL^{k+1}$ is bounded for all $k \in \{0,\dots, K\}$, we have
			\begin{equation*}
				\|B^{(k)}(t)\|_{\cX\rightarrow \cX} \leq \|M\cL^{k+1}\|_{\cX\rightarrow \cX}\,,
			\end{equation*}
			and the resolvent $R(z,M + tB(t))$ is uniformly bounded for $z \in \Gamma$ and $t \in [0,\epsilon]$. Iterating the above strategy completes the proof.
		\end{proof}

	\subsection{Proof of the generalized Zeno higher order expansion}\label{appx-sec:generalized-zeno-main}
		\begin{proof}[Proof of \Cref{thm:generalized-zeno}]
			As mentioned above, the uniform power convergence assumption is equivalent to the spectral gap assumption with zero nilpotent operators \cite[Prop.~3.1]{Becker.2021}. This is why there exists a curve $\gamma:[0,1] \rightarrow \C, \gamma(s) = \frac{2\delta+1}{3} e^{s 2\pi i}$ that separates the part of the spectrum of $M$ inside $\D_{\delta}$ from the isolated eigenvalues $\{\lambda_j\}_{j=1}^J$, along with similar curves $\Gamma_1, ..., \Gamma_J$ that separate each isolated eigenvalue (see \Cref{fig1}). Then, \Cref{lem:functional-calculus} shows that these curves separate the spectrum of $Me^{t\cL}$ for sufficiently small $t \in [0, \varepsilon]$. This allows us to perform the following splitting:
			\begin{equation}\label{eq:splitting}
				(Me^{\frac{1}{n}\cL})^n = \sum_{j=1}^{J}\lambda_j^n \left( \overline{\lambda}_j P_j\left(\tfrac{1}{n}\right)Me^{\frac{1}{n}\cL}P_j\left(\tfrac{1}{n}\right) \right)^n + \frac{1}{2\pi i}\oint_{\gamma} z^n R(z,Me^{\frac{1}{n}\cL})\, dz\,,
			\end{equation}
			with the eigenprojection
			\begin{equation*}
				P_j(t) = \frac{1}{2\pi i} \oint_{\Gamma_j}R(z,Me^{t\cL})\,dz
			\end{equation*}
			defined in \Cref{lem:functional-calculus}. The last term in \Cref{eq:splitting} is strictly contractive because
			\begin{equation*}
				\left\|\frac{1}{2\pi i}\oint_{\gamma} z^n R(z,Me^{\frac{1}{n}\cL}) \, dz \right\|_{\cX\rightarrow \cX} \leq c \widetilde{\delta}^n,
			\end{equation*}
			where $c = \max_{z \in \gamma}\|R(z,M)\|_{\cX\rightarrow\cX} \frac{2+2\widetilde{\delta}}{1+2\widetilde{\delta}}$ with $\widetilde{\delta} = \frac{1+2\delta}{3}$, holds true due to the second von Neumann series and the boundedness of the resolvent (see \cite[Eq.~23-24]{Moebus.2022}). Therefore, we consider the following operator on $\cX$ for any $j\in\{1, ..., J\}$:
			\begin{equation*}
				C(s) \coloneqq \overline{\lambda}_j P(s)Me^{s\cL}P(s).
			\end{equation*}
			and omit the $j$ in the notation so that $P(t)=P_j(s)$, $P=P(0)$, and $P'=P'(s)$. It admits the following expansion in the operator norm:
			\begin{equation*}
				\begin{aligned}
					C(s) &= \overline{\lambda} \left( P + \sum_{k=1}^K s^k F_k \right) \left( \sum_{k=0}^{K}\frac{s^k}{k!} M\cL^k \right) \left( P + \sum_{k=1}^K s^k F_k \right) + \cO(s^{K+1})\\
					&= P + \overline{\lambda}(PMP'+P'MP+PM\cL P) + \overline{\lambda} \sum_{k=2}^{K} s^k \sum_{k_1+k_2+k_3=k} F_{k_1} G_{k_2} F_{k_3} + \cO(s^{K+1})\\
					&= P + s(P'+P\cL P) + \overline{\lambda} \sum_{k=2}^{K} s^k \sum_{k_1+k_2+k_3=k} F_{k_1} G_{k_2} F_{k_3} + \cO(s^{K+1})\,,
				\end{aligned}
			\end{equation*}
			for $s \rightarrow 0$, where $k_1, k_2, k_3 \in \{0, ..., K\}$, $F_0 \coloneqq P$, and $G_{k_2} \coloneqq \frac{1}{k!} M\cL^k$. Here, we used the same trick as in \cite{Moebus.2019}, $P'=\frac{d}{dt}P(t)\bigl|_{t=0}=\frac{d}{dt}P^2(t)\bigl|_{t=0}=P'P+PP'$. Next, we apply the modified Chernoff Lemma (\ref{lem:chernoff}) to $C(s)$ on $P(s)\cX$ and then use the fundamental theorem of calculus to rewrite $(C-\1)\equiv C(s)-P(s)$, which implies the following bound on $\cX$:
			\begingroup\footnotesize
			\begin{align*}
				C^n(s) &- e^{n(C(s) - P(s))} P(s) \\
				&= -ns^2 \iiint_{0}^{1} \tau_1 e^{(1-\tau_1) n(C(s) - \1)} (C'(s\tau_2) - P'(s\tau_2)) (C'(s\tau_3) - P'(s\tau_3)) C_{\tau_1}^{n-1}(s) \, d\tau_{321}\,.
			\end{align*}
			\endgroup
			Similarly, the modified Dunford-Segal approximation \eqref{lem:dunford-segal} is adapted to
			\begingroup\footnotesize
			\begin{equation*}
				\begin{aligned}
					\Bigl(&e^{\frac{t}{s_1}(C(s_2) - P(s_2))} - e^{t\widetilde{C}'}\Bigr) \\
					&= t \int_0^1 e^{(1-\tau_1) \frac{t}{s_1}(C(s_2) - P(s_2))} \biggl(\frac{1}{s_1}(C(s_2) - P(s_2)) -\widetilde{C}'\biggr) e^{\tau_1t\widetilde{C}'} \, d\tau_1 \\
					&= t \iint_0^1 e^{(1 - \tau_1) \frac{t}{s_1}(C(s_2) - P(s_2))} \biggl(\frac{s_2}{s_1} \Bigl(C'(\tau_2 s_2)-P'(\tau_2 s_2) - \widetilde{C}'\Bigr) + \Bigl(\frac{s_2}{s_1} - 1\Bigr) \widetilde{C}' \biggr) e^{\tau_1t\widetilde{C}'} \, d\tau_{21}\\
					&= t \iiint_0^1 e^{(1 - \tau_1) \frac{t}{s_1}(C(s_2) - P(s_2))} \biggl(\frac{s_2^2}{s_1} \tau_2 \bigl(C''(\tau_3\tau_2 s_2)-P''(\tau_3\tau_2 s_2)\bigr) + \frac{s_2-s_1}{s_1} \widetilde{C}' \biggr) e^{\tau_1t\widetilde{C}'} \, d\tau_{321}\\
					&= t \frac{s_2^2}{s_1} \iiint_0^1 e^{(1 - \tau_1)\frac{t}{s_1}(C(s_2) - P(s_2))} \biggl(\tau_2\bigl(C''(\tau_3\tau_2 s_2)-P''(\tau_3\tau_2 s_2)\bigr) + \frac{s_2-s_1}{s_2^2} \widetilde{C}' \biggr) e^{\tau_1t\widetilde{C}'}\, d\tau_{321}.
				\end{aligned}
			\end{equation*}
			\endgroup
			with $\widetilde{C}'=C'-P'=P\cL P$. Then, we can repeat the proof of \Cref{thm:projective-zeno-higher-order} with the following set of operators (similar to \Cref{eq:expansion-sym-projective-zeno})
			\begin{equation}\label{eq:expansion-generalized-zeno}
				\begin{aligned}
					W^k_n(s)&\coloneqq P^{(k)}(\tfrac{s}{n})\qquad &&\text{and}& \qquad W_{\infty}^k &= P^{(k)}\,,\\
					X_n^k(s) &\coloneqq C^{(k+1)}(\tfrac{s}{n})\qquad &&\text{and}& \qquad X_{\infty}^k &= C^{(k+1)}(0)\,,\\
					Y_n^k(s_1,s_2) &\coloneqq e^{s_1(n-k)(C_{s_2}(\frac{1}{n})-P(\frac{1}{n}))}\qquad &&\text{and}& \qquad Y_{\infty}(s) &= e^{s(C'-P')}\,,\\
					Z_n^k(s) &\coloneqq \bigl(C_{s}(\tfrac{1}{n})\bigr)^{n-k}\,,
				\end{aligned}
			\end{equation}
			with $C_{\tau}(s)\coloneqq (1-s_2)P(s) + \tau C(s)$. Then, we follow the proof of \Cref{prop:projective-zeno-second-order} with the above operators and use the expansion of $P(s)$, given in \Cref{lem:functional-calculus}. This proves for every $j\in\{1,...,J\}$ the expansion
			\begin{equation*}
				\Bigl\|\left( \overline{\lambda}_j P_j\left(\tfrac{1}{n}\right)Me^{\frac{1}{n}\cL}P_j\left(\tfrac{1}{n}\right) \right)^n - e^{tP_j\cL P_j}P_j - \sum_{k=1}^{K} \frac{1}{n^k} E_{k,j}\Bigr\|_{\cX \rightarrow \cX} = \cO\Bigl(\frac{1}{n^{K+1}}\Bigr)\,,
			\end{equation*}
			which finishes the proof.
		\end{proof}
\end{document}